\crefname{algocf}{algorithm}{algorithms}
\Crefname{algocf}{Algorithm}{Algorithms}
\newcommand{\leader}{\mathsf{Leader}}
\newcommand{\nonleader}{\mathsf{Non\text{-}Leader}}
\newcommand{\id}{\mathsf{ID}}
\newcommand{\lnode}{\mathsf{L}}
\newcommand{\mnode}{\mathsf{M}}
\newcommand{\xnode}{\mathsf{X}}
\newcommand{\ruleu}{$\mathbf{Upstream}$}
\newcommand{\rulel}{$\mathbf{Leader}$}
\newcommand{\ruled}{$\mathbf{Downstream}$}
\newcommand{\ps}{$\mathsf{Port*}$}
\newtheorem{theorem}{Theorem}[section]
\newtheorem{lemma}[theorem]{Lemma}
\newtheorem{observation}[theorem]{Observation}
\newtheorem{definition}[theorem]{Definition}
\title{Beyond 2-Edge-Connectivity: Algorithms and Impossibility for Content-Oblivious Leader Election}
\author{
  Yi-Jun Chang\footnote{National University of Singapore. ORCID: 0000-0002-0109-2432. Email: cyijun@nus.edu.sg} 
  \and Lyuting Chen\footnote{National University of Singapore. ORCID: 0009-0002-8836-6607. Email: e0726582@u.nus.edu}
  \and Haoran Zhou\footnote{National University of Singapore. ORCID: 0009-0001-2458-5344. Email: haoranz@u.nus.edu}
}
\date{}
\begin{document}

\maketitle

\begin{abstract}
The \emph{content-oblivious model}, introduced by \citeauthor{fully-defective-22} (PODC 2022; Distributed Computing 2023), captures an extremely weak form of communication where nodes can only send asynchronous, content-less pulses. They showed that in 2-edge-connected networks, any distributed algorithm can be simulated in the content-oblivious model, provided that a unique leader is designated \emph{a priori}. Subsequent works of \citeauthor{content-oblivious-leader-election-24} (DISC 2024) and \citeauthor{2-connected-leader-election-25} (DISC 2025) developed content-oblivious leader election algorithms, first for unoriented rings and then for general 2-edge-connected graphs. These results establish that all graph problems are solvable in content-oblivious, 2-edge-connected networks.

Much less is known about networks that are \emph{not} 2-edge-connected. \citeauthor{fully-defective-22} showed that no non-constant function \(f(x,y)\) can be computed correctly by two parties using content-oblivious communication over a single edge, where one party holds \(x\) and the other holds \(y\).
This seemingly ruled out many natural graph problems on non-2-edge-connected graphs.


In this work, we show that, with the knowledge of network topology $G$, leader election is possible in a wide range of graphs. Our main contributions are as follows:
\begin{description}
    \item[Impossibility:] \emph{Graphs} symmetric about an edge admit no randomized terminating leader election algorithm, even when nodes have unique identifiers and full knowledge of $G$.
    \item[Leader election algorithms:] \emph{Trees} that are not symmetric about any edge admit a quiescently terminating leader election algorithm with topology knowledge, even in anonymous networks, using $O(n^2)$ messages, where $n$ is the number of nodes. Moreover, even-diameter trees admit a terminating leader election given only the knowledge of the network diameter $D = 2r$, with message complexity $O(nr)$.
    \item[Necessity of topology knowledge:] In the family of graphs $\mathcal{G} = \{P_3, P_5\}$, both the 3-path $P_3$ and the 5-path $P_5$ admit a quiescently terminating leader election if nodes know the topology exactly. However, if nodes only know that the underlying topology belongs to $\mathcal{G}$, then terminating leader election is impossible.
\end{description}
\end{abstract}

\thispagestyle{empty}
\newpage
\tableofcontents
\thispagestyle{empty}
\newpage
\pagenumbering{arabic}
\section{Introduction}
\label{sect:intro}

Distributed systems often operate under unreliable communication, where messages may be corrupted. In many networks, silicon-based or natural, the inherent presence of \emph{alteration noise} means any \emph{fault-tolerant} algorithm must take received messages ``with a grain of salt'', instead of relying on their content verbatim, whose integrity could have been compromised in transmission. 

\paragraph{Content-oblivious model} To design a \emph{fault-tolerant} algorithm that operates correctly even when facing the most extreme form of alteration noise, the work of \citet{fully-defective-22} adopted the idea of using \emph{the sheer existence} of messages to encode information. To demonstrate the full power of their results, they first proposed the extremely weak model of \emph{fully-defective} networks, where message content is subject to arbitrary, unbounded alteration; therefore, such content-less messages are aliased \emph{pulses}, and any correct algorithm must completely ignore message content, hence \emph{content-oblivious}. Also, the network is \emph{asynchronous}, meaning that nodes lack access to a global clock; each pulse delivery takes an unpredictable time, thus forbidding the use of the presence/absence of pulses to encode information.

\paragraph{Universal solvability from 2-edge-connectivity} Naturally, one might conjecture that any non-trivial computational task is impossible when all messages are fully corrupted and arbitrarily delayed. However,~\citet{fully-defective-22} showed otherwise: Assuming the presence of a \emph{pre-elected} leader, any noiseless algorithm can be simulated in the content-oblivious setting, provided the network is \emph{2-edge-connected}. The key idea is that the leader communicates with a recipient using two disjoint paths—one to transmit the message in unary, and the other to signal the end of the transmission. This ensures that the recipient never waits indefinitely for the message to conclude, allowing the computation to make progress. Next, the work of \citet{content-oblivious-leader-election-24} showed how to elect a leader from scratch in an oriented ring as the simplest form of 2-edge-connected networks, while \cite{2-connected-leader-election-25} removed the orientation requirement. Moreover, the latter \cite{2-connected-leader-election-25} showed that even for general 2-edge-connected networks, with an upper bound on the network size $N$ known, non-uniform leader election exists. Combined with the simulation result of \citet{fully-defective-22}, their findings imply the solvability of all graph problems in a content-oblivious, 2-edge-connected network.

\paragraph{Beyond 2-edge-connectivity}
On graphs that are not 2-edge-connected, \citet{fully-defective-22} showed the first negative result in the \emph{vanilla} content-oblivious model. They showed that no non-constant function \(f(x,y)\) can be computed correctly and deterministically by two parties using content-oblivious communication over a single edge, where one party holds \(x\) and the other holds \(y\). This two-party impossibility immediately yields a general impossibility result for any network containing a bridge $e$, since we may let the two parties simulate the two connected components of $G-e$.

While this observation suggests that many natural graph problems---including leader election---might be impossible to solve on non-2-edge-connected graphs, it is not immediately clear what \emph{specific} impossibility results can be derived directly from the two-party impossibility. In this work, we take a different perspective and investigate which tasks \emph{can} still be performed on non-2-edge-connected graphs. Indeed, it may well be that many non-trivial tasks remain feasible in ways that are fully consistent with the two-party impossibility result.

\subsection{Contribution and Roadmap}
In this work, we show a \emph{complete characterization} of content-oblivious leader election on trees with topology knowledge: Terminating leader election is possible \emph{if and only if} the tree is not symmetric about an edge.

\begin{definition}[Edge symmetry]
    A graph $G$ is symmetric about an edge $e = \{u,v\}$ if $G-e$ has exactly two connected components, $H$ and $K$, containing $u$ and $v$ respectively, and there is an isomorphism $f:V(H) \rightarrow V(K)$ with $f(u) = v$. 
\end{definition}

If a graph $G$ is symmetric about an edge $e$, then $e$ is necessarily a bridge, so $G$ is non-2-edge-connected.
On the feasibility side, in \Cref{sect:algorithm}, we first design a content-oblivious leader election algorithm for even-diameter trees only (\Cref{thm:even_diameter_leader_election}); following that, we extend the result to all trees that are not symmetric about any edge (\Cref{thm:general_tree_leader_election}).

\begin{restatable}[Leader election on even-diameter trees]{theorem}{mainthmeven}
\label{thm:even_diameter_leader_election} 
There exists a terminating, anonymous content-oblivious leader election algorithm with message complexity $O(nr)$ for a tree with even diameter $D = 2r$ and $n$ nodes, provided that each node knows the diameter $D$ of the tree \emph{a priori}.
\end{restatable}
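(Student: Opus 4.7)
The plan is to elect the unique center $c$ of the tree: since $D = 2r$ is even, there is a unique vertex of eccentricity exactly $r$, namely $c$, while every other vertex has eccentricity strictly greater than $r$. I would design a two-phase algorithm: a layered convergecast that simulates iterated leaf-peeling from the leaves toward $c$, followed by a broadcast from $c$ outward for termination. In the convergecast, each leaf initiates by sending pulses on its sole port, and each internal node of degree $k$ maintains per-port pulse counts; once it has accumulated enough pulses on $k - 1$ of its ports, it treats the remaining port as ``upstream'' and forwards there, peeling off one layer. Iterated over $r$ layers, this reduces the tree to $c$ alone. Afterwards, $c$ broadcasts by sending one pulse on each port; each non-leaf recipient forwards on its other ports and terminates as a non-leader; leaves terminate on receipt.

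The main obstacle is the indistinguishability between the center and a non-center vertex in the ``$k - 1$ pulses received'' state: in the anonymous, asynchronous, content-oblivious setting both look locally identical yet demand opposite actions (hold vs.\ forward). A naive single-pulse convergecast fails because $c$ may mistakenly send a pulse upstream, destroying the convergecast structure. The resolution exploits the known radius $r$: the convergecast is structured into $r$ layers, encoded via per-port pulse counts (for instance, each leaf sends $r$ pulses, and each internal node forwards upstream only once its per-port counts match the expected layer-$i$ pattern). The center is then the unique node that completes all $r$ layers on every port without ever forwarding.

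I anticipate the correctness proof proceeding by induction on the layer index $i \in \{1, \dots, r\}$, with the invariant that after layer $i$ completes, the set of still-active vertices is precisely the tree $T_i$ obtained from $T$ by $i$ iterations of classical leaf-peeling. Since $T_r = \{c\}$ by the definition of the center of an even-diameter tree, $c$ is the unique survivor and declares itself leader. The broadcast phase terminates every node deterministically because it runs on a tree and every node receives exactly one broadcast pulse. The message complexity bound follows because each edge carries $O(r)$ pulses across the $r$ convergecast layers plus $O(1)$ during the broadcast, yielding $O(nr)$ over the $n - 1$ edges. The hardest technical step will be verifying that the layered encoding remains robust under arbitrary asynchronous interleavings of pulses, so that layer counts on distinct ports cannot be confused across phases and the peeling invariant survives the induction.
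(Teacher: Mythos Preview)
Your proposal is correct and matches the paper's approach closely: the paper likewise elects the unique center via a leaf-peeling convergecast encoded in pulse counts, followed by a one-pulse broadcast for termination, and proves correctness by layer-by-layer induction with an $O(nr)$ message bound. The one point the paper makes concrete that you leave as ``for instance'' is the exact encoding: a node whose subtree has height $i$ sends $r-i$ pulses upstream (so leaves send $r$, and counts strictly decrease toward the center); this ``upside-down'' choice is what resolves the hold-versus-forward ambiguity you flag, since the center necessarily has at least two neighbors in layer $r-1$ each sending only one pulse, which blocks every \textbf{Upstream} rule at the center while leaving the \textbf{Leader} rule (at least one pulse on every port) satisfiable there and nowhere else.
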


\begin{restatable}[Leader election on general asymmetric trees]{theorem}{mainthm}
\label{thm:general_tree_leader_election}
There exists a quiescently terminating, anonymous content-oblivious leader election algorithm with message complexity $O(n^2)$ for a tree with $n$ nodes, provided that each node knows tree topology $G$ \emph{a priori}, and the tree is not symmetric about any edge.
\end{restatable}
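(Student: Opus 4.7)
My plan is to split on the parity of $D = \mathrm{diam}(T)$, which every node computes from $G$, and reduce each case to \Cref{thm:even_diameter_leader_election}.

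\textbf{Even diameter.} If $D = 2r$, then $T$ has a unique center vertex, and I invoke \Cref{thm:even_diameter_leader_election} directly to obtain a quiescently terminating leader election with $O(nr) \le O(n^2)$ messages.

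\textbf{Odd diameter.} If $D = 2r+1$, then $T$ has a unique central edge $e^* = \{u^*, v^*\}$. One can check that any edge-symmetry of a tree is necessarily about its central edge (the swap automorphism has no fixed vertex and must preserve the central structure, forcing it to swap the endpoints of the central edge), so the hypothesis ``not symmetric about any edge'' simplifies to $H \not\cong K$, where $H$ and $K$ are the two components of $T - e^*$ containing $u^*$ and $v^*$ respectively. Fixing a canonical total order on isomorphism classes of rooted trees (known to every node from $G$), I designate one of $u^*, v^*$ as the target leader; say this is $u^*$.

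To elect $u^*$, I plan to simulate \Cref{thm:even_diameter_leader_election} on the tree $\tilde T$ obtained from $T$ by subdividing $e^*$ with a virtual vertex $w$, so $\tilde T$ has $n+1$ nodes, even diameter $2r+2$, and unique center $w$. In the simulation, $u^*$ plays the dual role of itself and of $w$: the virtual edge $\{u^*, w\}$ is internalized at $u^*$ (no physical pulses required), the virtual edge $\{w, v^*\}$ is realized by the real edge $\{u^*, v^*\}$, and every other node (including $v^*$) plays itself in $\tilde T$, with its port toward $u^*$ reinterpreted as a port toward $w$ when applicable. Upon termination with $w$ elected, $u^*$ outputs ``leader'' and every other node outputs ``non-leader'', for a message complexity of $O((n+1)(r+1)) = O(n^2)$.

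For the simulation to be well-defined, $u^*$ and $v^*$ must distinguish themselves from each other, and every remaining node must know which side of $e^*$ it lies on. Both follow from each node computing, given $G$ and its own local view, its orbit under $\mathrm{Aut}(T)$: since $T$ is asymmetric about $e^*$, the two endpoints $u^*, v^*$ lie in singleton orbits and are self-identifiable, and since $H \not\cong K$, the orbits of all remaining vertices are confined to exactly one side of $e^*$. The main obstacle I anticipate is making the simulation rigorous---in particular, showing that asynchronous pulses on the real edge $\{u^*, v^*\}$ are interpreted consistently as pulses on the virtual edge $\{w, v^*\}$ in $\tilde T$, that $u^*$ faithfully maintains the joint state of itself and $w$ under asynchrony, and that quiescent termination of \Cref{thm:even_diameter_leader_election} on $\tilde T$ lifts to quiescent termination on $T$.
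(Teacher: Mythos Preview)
There are two genuine gaps in your proposal.

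\textbf{Quiescence.} \Cref{thm:even_diameter_leader_election} guarantees only \emph{termination}, not \emph{quiescent} termination; the paper says so explicitly in the paragraph ``Failing the quiescence guarantee'' immediately after its proof. A node $u \in V_{s+1}$ triggers its \ruleu\ rule once each child has sent at least $r-s$ pulses, but children lying in layers strictly below $V_s$ will ultimately send more, and $u$ cannot tell whether such late pulses are still in transit when it terminates. Hence your even-diameter case already fails to deliver the quiescence required by \Cref{thm:general_tree_leader_election}, and your odd-diameter simulation, being a reduction to the same theorem, inherits the defect.

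\textbf{Self-identification of $u^*$.} Your simulation requires $u^*$ to behave differently from every other node from the outset: it must treat its physical port toward $v^*$ as belonging to the virtual vertex $w$ rather than to itself, and must maintain an internal $\{u^*,w\}$ channel. You justify this by saying that $u^*$ lies in a singleton $\mathrm{Aut}(T)$-orbit and is therefore ``self-identifiable,'' but this conflates a global structural fact about $G$ with what an anonymous node can actually compute. At initialization a node knows only its degree and the abstract shape of $G$; it cannot tell \emph{which} vertex of $G$ it occupies unless it is the unique vertex of that degree. For a concrete failure, take the path $a\text{--}b\text{--}c\text{--}d\text{--}e\text{--}f$ with an extra leaf $g$ attached to $b$: the diameter is $5$, the central edge is $\{c,d\}$, the tree is not edge-symmetric, yet $c$, $d$, $e$ all have degree $2$ and are locally indistinguishable. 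Neither central node can recognize itself as $u^*$, nor can it tell which of its two ports leads to the other center, so your simulation cannot even be initialized. Running a preliminary protocol to locate $u^*$ amounts to solving leader election first.

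The paper circumvents both obstacles with a single idea: rather than encoding subtree \emph{height} by pulse count, it encodes the full rooted-subtree \emph{isomorphism class} via a canonical enumeration $T_1 \prec \cdots \prec T_k$ (computed locally from $G$), with a node whose subtree is $T_i$ sending exactly $k-i$ pulses upward. This makes each child's pulse count exact rather than merely a lower bound, so the parent knows no further pulses are coming and termination is quiescent. And because the two centers $l_1,l_2$ have non-isomorphic subtrees by the asymmetry hypothesis, only one of them matches the \rulel\ trigger pattern for $T_k$; the correct leader thus emerges from the received pulse counts alone, with no self-identification ever required.
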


Turning to the impossibility side, in \Cref{sect:impossibility}, we show that for \emph{graphs} that are symmetric about an edge, topology knowledge does not result in any randomized algorithm that significantly outperforms trivial guesswork. 

\begin{restatable}[Edge symmetry implies impossibility, randomized]{theorem}{impossibility}
\label{thm: symmetric_about_edge}
Let $G$ be a network that is symmetric about an edge, where nodes are anonymous and \ul{know the network topology $G$}. Then there exists no terminating, randomized leader-election algorithm which succeeds with probability greater than $\frac{1}{2}$.
\end{restatable}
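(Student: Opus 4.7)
The plan is to exhibit a single adversarial pulse-scheduling strategy under which every terminating, randomized leader-election algorithm fails with probability at least $1/2$. Let $e=\{u,v\}$ be the bridge about which $G$ is symmetric, let $H\ni u$ and $K\ni v$ be the two components of $G-e$, and let $\sigma$ be the fixed-point-free automorphism of $G$ obtained from the edge-symmetry isomorphism $V(H)\to V(K)$ together with its inverse on the other side.

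The central idea is a \emph{hoarding scheduler} $S^\star$. The adversary fixes some arbitrary internal schedule $S_H$ for pulses whose endpoints both lie in $H$, uses the $\sigma$-conjugate schedule $S_K=\sigma(S_H)$ for pulses internal to $K$, and defers every pulse traversing the bridge $e$ until after all nodes have halted. Since the algorithm is terminating \emph{under every} valid scheduling, under $S^\star$ all nodes must halt in finite time; the deferred bridge pulses can then be flushed in finite time, so $S^\star$ is itself a valid adaptive scheduler.

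Under $S^\star$, the executions on the two sides are completely decoupled: no node in $H$ ever receives a pulse from $K$ before halting, and symmetrically. The set $L\cap V(H)$ of leaders produced inside $H$ is thus a deterministic function of the random tapes $R_H=(r_x)_{x\in V(H)}$ alone, and $L\cap V(K)$ depends only on $R_K$. Define the Bernoulli indicators $X=\mathbf{1}[L\cap V(H)\ne\emptyset]$ and $Y=\mathbf{1}[L\cap V(K)\ne\emptyset]$. Independence of the tapes across nodes gives $X\perp Y$, while anonymity, the isomorphism $H\cong K$, and the $\sigma$-conjugate choice of $(S_H,S_K)$ give $X\stackrel{d}{=}Y$. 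Writing $s:=\Pr[X=1]=\Pr[Y=1]$, a successful election requires exactly one leader in $V$, which forces exactly one of $X,Y$ to equal $1$, so
\[
\Pr[\text{success}]\ \le\ \Pr[X\oplus Y]\ =\ 2s(1-s)\ \le\ \tfrac{1}{2}.
\]

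The main obstacle I anticipate is justifying the validity of the hoarding scheduler: one must argue that indefinitely deferring the bridge pulses cannot itself sabotage the terminating property, which requires the convention that \emph{terminating} means halting under every valid adversarial scheduling, not merely under benign ones. A secondary technicality is formalizing ``$\sigma$-conjugate internal scheduling'' so that the induced executions on $H$ and $K$ really are mirror images of one another; this is routine once one fixes a canonical representation of the scheduler's action on each pulse, since $\sigma$ is a graph automorphism and the algorithm is anonymous.
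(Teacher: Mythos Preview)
Your hoarding-scheduler argument has a genuine gap, precisely the one you flag in your last paragraph without actually resolving. The terminating guarantee says only that under every schedule in which \emph{every sent pulse is eventually delivered}, all nodes halt. Your $S^\star$ defers bridge pulses ``until after all nodes have halted,'' but nothing forces the nodes to halt before receiving any bridge pulses; if they do not, $S^\star$ never delivers those pulses and hence is not a valid schedule, so the terminating guarantee simply does not apply to it. The chain ``terminating under every valid scheduling $\Rightarrow$ halts under $S^\star$ $\Rightarrow$ $S^\star$ is valid'' is circular. A concrete counterexample already appears on $P_2$ (which is symmetric about its unique edge): let each node send one pulse upon initiation, wait to receive one pulse, and then flip a fair coin for $\leader$/$\nonleader$. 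This algorithm terminates under every valid schedule (with success probability exactly $1/2$), yet under your $S^\star$ neither node ever receives a pulse, neither halts, and the ``defer until all halt'' clause is never triggered---so your scheduler is ill-defined and your $2s(1-s)$ computation never gets off the ground.

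The paper avoids this by a different route. It first reduces the general edge-symmetric $G$ to the two-node case $P_2$ by having each endpoint simulate one of the isomorphic halves $H$, $K$, and then analyzes $P_2$ directly by modeling each endpoint as a probabilistic unary automaton. The key device is that the probability $P_{\leader}(c)$ of being in a $\leader$-accepting state after receiving $c$ pulses is monotone nondecreasing in $c$, so one can uniformly bound it by $P_{\leader}=\sup_c P_{\leader}(c)$ (and likewise $P_{\nonleader}$) \emph{regardless of how many bridge pulses are actually exchanged}. Together with $P_{\leader}+P_{\nonleader}\le 1$ this yields $\Pr[\text{success}]\le 2P_{\leader}P_{\nonleader}\le 1/2$. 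In short, the paper does not try to decouple the two sides at all; it lets interaction across the bridge happen freely, but exploits the fact that a contentless channel only transmits a count, and absorbs that count into a worst-case bound via monotonicity.
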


As a corollary, we obtain the analogous impossibility result in the deterministic setting where nodes have unique $\id$s. The proof of the corollary is also in \Cref{sect:impossibility}.

\begin{restatable}[Edge symmetry implies impossibility, deterministic]{corollary}{impossibilitydet}
\label{thm: symmetric_about_edge_det}
Let $G$ be a network that is symmetric about an edge, and nodes are equipped with unique $\id$s and \ul{know the network topology $G$}. Then there exists no terminating, deterministic content-oblivious leader election algorithm.
\end{restatable}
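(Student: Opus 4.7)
The plan is to reduce the deterministic case to the randomized impossibility of \Cref{thm: symmetric_about_edge}. Suppose, toward contradiction, that a terminating, deterministic, content-oblivious leader-election algorithm $A$ exists on $G$ for nodes with unique $\id$s and topology knowledge. I would construct a randomized, anonymous algorithm $A'$ as follows: each node, on activation, independently draws an $\id$ uniformly at random from $\{1, 2, \ldots, M\}$ for a sufficiently large $M$ (say $M = n^3$), and then simulates $A$ using the sampled identifier.

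The success-probability analysis is immediate. A union bound gives that all $n$ sampled $\id$s are pairwise distinct with probability at least $1 - \binom{n}{2}/M$, which strictly exceeds $\tfrac{1}{2}$ for our choice of $M$; conditioned on this event, $A$ correctly elects a unique leader, and hence so does $A'$. Termination of $A'$ follows from that of $A$ on the event that the sampled $\id$s are distinct; on the complementary low-probability event, termination must be argued separately, as I discuss next. Combining these pieces yields a terminating, randomized, anonymous leader-election algorithm on $G$ whose success probability strictly exceeds $\tfrac{1}{2}$, directly contradicting \Cref{thm: symmetric_about_edge}.

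The main subtlety to address is that the specification of $A$ only guarantees termination when $\id$s are unique, so a priori $A'$ could fail to halt when the sampled $\id$s collide. I would dispense with this by normalizing $A$ without loss of generality so that it always halts---possibly with an arbitrary, incorrect output---on non-unique inputs as well. Because every node knows the topology, and hence an upper bound on the number of reachable local configurations, one can equip $A$ with an explicit local budget on the number of pulses it will transmit before self-terminating; this alteration does not affect $A$'s behavior on any unique-$\id$ execution while ensuring unconditional termination, which is enough to close the reduction.
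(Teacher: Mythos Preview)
Your core reduction---sample random identifiers and run the assumed deterministic algorithm---is exactly what the paper does (the paper uses identifier space $\{1,\dots,3|V(G)|^2\}$ and obtains success probability at least $2/3$, but the idea is identical).

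Where you diverge is the third paragraph. The paper does not address termination on the collision event at all, because it is not needed: the proof of \Cref{thm: symmetric_about_edge} bounds $\Pr[\text{Correct}] \le \tfrac{1}{2}$ for \emph{any} algorithm, with non-terminating executions simply contributing $0$ to the success probability. So the constructed $A'$ need only \emph{succeed} with probability greater than $\tfrac{1}{2}$, not terminate with probability $1$, and your first two paragraphs already finish the argument.

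Your proposed budget-based normalization, by contrast, does not clearly work. In the asynchronous, event-driven content-oblivious model a node acts only upon receiving a pulse; capping the number of pulses a node \emph{sends} does nothing to force it to halt if it is simply never triggered again. Moreover, ``an upper bound on the number of reachable local configurations'' is not derivable from topology alone, since a local configuration depends on the (unbounded) identifier and on the unbounded per-port pulse counts. Fortunately, as noted, this patch is unnecessary.
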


In \Cref{sect:necessity}, we show that the \emph{exact} knowledge of the network topology is indeed a necessary condition for terminating leader election on asymmetric trees.



\begin{restatable}[Necessity of topology knowledge]{theorem}{necessity}
There exists no deterministic, terminating content-oblivious leader election algorithm if the underlying topology is drawn from the family $\mathcal{G} = \{P_3, P_5\}$, and $\mathcal{G}$ is known to the nodes \emph{a priori}, even if the nodes have unique $\id$s.
\end{restatable}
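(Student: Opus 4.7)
The plan is a proof by contradiction via an indistinguishability argument. Suppose a deterministic terminating algorithm $A$ elects a unique leader on every instance drawn from $\mathcal{G}=\{P_3,P_5\}$, with unique $\id$s and the family $\mathcal{G}$ known to the nodes. The central observation is that a degree-$1$ node's eventual output is a deterministic function of its own $\id$ and the sequence of sends/receives on its unique port; hence, any two degree-$1$ nodes sharing the same $\id$ and the same single-port trace must agree on their leader/non-leader decision. The strategy is to construct a $P_3$ execution and a $P_5$ execution whose endpoints are matched in this sense, and then read off a contradiction with the leader-uniqueness constraint.

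Concretely, I would first fix three distinct $\id$s $(a,b,c)$ for $P_3=u_1\text{-}u_2\text{-}u_3$, run $A$ to termination under any fixed schedule, record the single-port traces $\tau_1,\tau_3$ observed by the endpoints and the unique leader $L\in\{u_1,u_2,u_3\}$. Next, I would consider $P_5=v_1\text{-}v_2\text{-}v_3\text{-}v_4\text{-}v_5$ with $\id_{v_1}=a$, $\id_{v_5}=c$, and fresh distinct $\id$s $x,y,z$ placed at $v_2,v_3,v_4$. The adversarial asynchronous scheduler I design postpones every pulse on the two interior edges $\{v_2,v_3\}$ and $\{v_3,v_4\}$ past the time by which $v_1,v_2,v_4,v_5$ have all already terminated; within that critical window $v_2$ and $v_4$ run $A$ as if their port to $v_3$ were silent. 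The aim is to pick $x$ and $y$ so that $v_2$, in this truncated regime, emits toward $v_1$ exactly the outgoing pulse pattern that $u_2$ emitted toward $u_1$, and symmetrically for $v_4$ and $v_5$; this makes $\tau_1$ and $\tau_3$ reappear at the $P_5$ endpoints verbatim, so that $v_1,v_5$ inherit $u_1,u_3$'s decisions.

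Given the matched traces, I conclude by cases on $L$. If $L=u_1$ or $L=u_3$, a second invocation of the same simulation starting from the mirrored $P_3$ instance $(c,b,a)$ forces the other endpoint of $P_5$ to also declare itself leader; splicing both schedules onto a single $P_5$ execution produces two leaders, violating correctness. If $L=u_2$, both endpoints of $P_5$ declare non-leader, so one of $v_2,v_3,v_4$ must be elected; but the same truncation argument applied across the bridge edges $\{v_1,v_2\}$ and $\{v_4,v_5\}$, together with the two-party content-oblivious impossibility of \citeauthor{fully-defective-22}, shows that the interior triple cannot consistently single out a unique leader using content-less pulses while agreeing with a degree-$1$ shadow behavior on each side.

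The main obstacle is the existence step: exhibiting interior $\id$s $x,y$ whose truncated single-port behaviors match the outgoing patterns produced by $u_2$ in the $P_3$ execution. I would attack this by a pigeonhole/continuity argument over the (finite) prefix of actions taken by $A$ when its second port is kept silent, parametrized by the $\id$: by the two-party impossibility applied to the bridge $\{v_1,v_2\}$, only finitely many distinct outgoing prefixes can be realized before $v_1$ terminates, so the infinite reservoir of admissible $\id$s must collide on the required pattern. Making this pigeonhole rigorous—and matching both the $v_1$-facing and the $v_5$-facing pattern simultaneously—is the delicate combinatorial heart of the proof.
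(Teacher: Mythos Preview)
Your proposal has a genuine gap at precisely the point you flag as ``the delicate combinatorial heart,'' and it also misses a simplification that makes the whole trace-matching programme unnecessary.

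The simplification: a degree-$1$ node's eventual output is a function of its $\id$ \emph{alone}, not of its trace. Pulses are content-less, so such a node is a deterministic unary automaton; termination means that once an output is committed it never changes, hence the output (for all sufficiently large pulse counts) is determined by $\id$. This immediately gives: at most one leaf $\id$ can output $\leader$, else place two such $\id$s at the leaves of $P_3$. Your trace-matching step is therefore not needed for endpoints---but more importantly, your attempt to carry it out fails. You need an $\id$ $x$ for $v_2$ such that, with the $v_3$-port held silent, $v_2$ reproduces toward $v_1$ the specific pulse pattern that $u_2$ produced in the $P_3$ run. But $u_2$'s pattern depended on pulses arriving on \emph{both} ports, while your truncated $v_2$ sees only one; there is no reason any $\id$ realises that particular pattern. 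The pigeonhole you invoke is not supplied by the two-party impossibility (a negative statement about computing non-constant functions, which says nothing about the number of realised prefixes), and even a finite prefix space would only give collisions among $v_2$-patterns, not a hit on the \emph{specific} $u_2$-pattern. Your $L=u_2$ case is similarly unresolved: nothing prevents the interior triple from consistently electing $v_3$.

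The paper avoids all of this with a simulation trick. Introduce a virtual degree-$1$ node type $\xnode$ that internally simulates the two-node chain $\lnode-\mnode-$ (with a simple $\id$-doubling to avoid clashes). Then $\xnode-\lnode$ simulates $P_3$ and $\xnode-\mnode-\xnode$ simulates $P_5$, so the assumed algorithm solves leader election on all four labeled graphs. Since $\xnode$ has degree $1$, its output is again a function of $\id$ alone; from the $\xnode-\mnode-\xnode$ instance, at most one $\xnode$-$\id$ outputs $\leader$. Now pick $\id$s above both thresholds and run on $\xnode-\lnode$: both nodes output $\nonleader$, contradicting correctness on the simulated $P_3$. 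The $\xnode$ device is exactly the missing idea that converts the stubborn ``interior of $P_5$'' case into another degree-$1$ automaton, where the $\id$-only argument applies directly.
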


Finally, in \Cref{sec:Stabilizing}, we present a simple \emph{stabilizing} leader election algorithm that works for all trees, showing that the \emph{termination} requirement is indeed \emph{necessary} for the impossibility results. 

\begin{restatable}[Stabilizing leader election]{theorem}{stabilizing}
\label{thm:stabilizing}
There exists a stabilizing content-oblivious leader election algorithm of message complexity $n+2\cdot\mathsf{ID_{max}}-1$ on a tree with $n$ nodes with unique $\id$s.
\end{restatable}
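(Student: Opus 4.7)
The plan is a saturation-style pruning algorithm: each node sends exactly one ``first pulse'' toward a unique neighbor, and the inevitable residual conflict on a single edge is resolved by a compact ID-based tiebreak.

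Concretely, each node $v$ maintains per-edge sent/received pulse counters. As soon as $v$ has received pulses from $\deg(v)-1$ distinct neighbors (which occurs immediately for a leaf), $v$ sends one ``first pulse'' to its unique remaining neighbor $p_v$. If $v$ subsequently receives a pulse along $\{v,p_v\}$, it transitions into a ``tiebreak'' state and sends $\id(v)$ additional pulses along that same edge. Continuously, $v$ asserts $\isleader$ iff it is in the tiebreak state and its received count on $\{v,p_v\}$ strictly exceeds its sent count on $\{v,p_v\}$.

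The crux is a structural claim: in any execution, exactly one ``conflict edge'' $\{u,v\}$ arises, meaning $p_u = v$ and $p_v = u$. View the first-pulse assignment as a functional graph contributing $n$ directed tree-edges on $n-1$ undirected edges. Letting $c(e) \in \{0,1,2\}$ count directed edges on $e$, we have $\sum_e c(e) = n$ and $\sum_e 1 = n-1$, yielding $\#\{c=2\} = 1 + \#\{c=0\}$. To show $\#\{c=0\} = 0$: if $\{u,v\}$ were unused, then $p_u \neq v$, so $u$ must have received from $v$ before reaching all-but-one; but $v$'s outgoing pulses---first or tiebreak---travel exclusively along $\{v,p_v\}$, forcing $p_v = u$ and contradicting unusedness. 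Hence exactly one conflict edge.

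Given the unique conflict edge $\{u,v\}$, only $u$ and $v$ enter tiebreak, sending $\id(u)$ and $\id(v)$ additional pulses respectively. After all messages are delivered, the endpoint with the smaller $\id$ observes received $=1+\id(\text{other}) > 1+\id(\text{self})=\,$sent on the conflict edge and stabilizes as the unique leader; its counterpart sees the reverse inequality and remains a non-leader, as do all other nodes (which never enter tiebreak). The total message count is $n$ first pulses plus $\id(u) + \id(v) \le 2\idmax - 1$ tiebreak pulses, i.e.\ at most $n + 2\idmax - 1$. The principal obstacle is proving uniqueness of the conflict edge---asynchronous pulse orderings could naively appear to admit multiple simultaneous ``centers,'' and the lemma that every tree edge carries at least one first pulse is the key structural fact that rules this out.
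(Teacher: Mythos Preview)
Your algorithm coincides with the paper's Algorithm~2: a saturation-style leaf-peeling that collapses the tree to a single edge, followed by an $\id$-based comparison on that edge, with the smaller-$\id$ endpoint stabilizing as leader. The message count matches for the same reason.

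Your correctness argument, however, takes a different route. The paper reasons operationally: it maintains a shrinking auxiliary graph $H$ (a leaf is removed from $H$ whenever its pulse reaches a non-leaf neighbor) and shows that while $|H|>2$, the next delivered pulse necessarily lands at a non-leaf and shrinks $H$; hence $H$ reaches exactly two nodes $s,t$, both of which then enter the $\id$-comparison phase. You instead argue combinatorially: treating $v\mapsto p_v$ as $n$ directed edges on $n-1$ undirected ones, you get $\#\{c{=}2\}=1+\#\{c{=}0\}$ and rule out $c{=}0$ by a local contradiction. Your approach is cleaner and scheduler-independent; the paper's makes the dynamics explicit but requires tracking the adversary step by step.

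One point to tighten: your counting identity $\sum_e c(e)=n$ presupposes that every node eventually sends its first pulse, i.e., that $p_v$ is always defined. You use this implicitly but do not justify it. A quick fix: let $S$ be the set of nodes that never fire; any $v\in S$ fails to receive from at least two neighbors, and each such neighbor must itself lie in $S$ (a firing neighbor $u$ with $p_u\neq v$ would have received from $v$, impossible since $v\in S$ never sends). Thus $G[S]$ has minimum degree at least $2$, contradicting that it is a forest. With this lemma in hand, your argument is complete.
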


The algorithm of \Cref{thm:stabilizing} is of independent interest. Although the synchronized counting phase in the algorithm from the prior work~\cite{2-connected-leader-election-25} already yields a stabilizing leader election algorithm, it is \emph{non-uniform}, as it requires a known upper bound $N$ on the number of nodes~$n$. In contrast, the algorithm in \Cref{thm:stabilizing} is \emph{uniform}: it does not rely on any prior knowledge of the network or the identifier space. Moreover, for many parameter regimes, our message-complexity bound of $n + 2 \cdot \mathsf{ID_{max}}-1$ in \Cref{thm:stabilizing} compares favorably with the $O(N m \mathsf{ID_{min}})$ bound from the prior work~\cite{2-connected-leader-election-25}, where $m$ is the number of edges.

\paragraph{Technical overview} 
We briefly outline the main techniques underlying our leader election algorithms. Let $D$ denote the diameter of the tree. In an even-diameter tree, if we repeatedly remove the leaves of the remaining tree, then after $D/2$ steps, only a single node remains. Our algorithm for even-diameter trees (\Cref{thm:even_diameter_leader_election}) can be viewed as an implementation of this \emph{leaf-peeling} process in the content-oblivious model, where the main challenge is handling asynchrony using only pulses. The key idea is to use the number of pulses sent to encode the height of a node.

For odd-diameter trees, the above leaf-peeling process leaves exactly two nodes. To elect a leader between them, we exploit the assumption that the tree is asymmetric. Our algorithm for general asymmetric trees (\Cref{thm:general_tree_leader_election}) uses the number of pulses to encode the topology of a subtree, thereby breaking symmetry between the two remaining nodes.

Finally, for our stabilizing algorithm for all trees (\Cref{thm:stabilizing}), we simply trim leaves in an arbitrary order until only two nodes remain, and then break symmetry between them by comparing their identifiers.

\subsection{Additional Related Works}
Leader election is one of the most fundamental problems in distributed systems, and has therefore enjoyed a long history of study (e.g., \cite{10.1145/359104.359108,10.1145/359024.359029,10.1145/69622.357194,DBLP:journals/jal/DolevKR82,Itai90negative}). The works of \cite{boldi1996symmetry,yamashita1996computing} study the conditions and impossibilities of breaking symmetry in anonymous networks, while \cite{yamashita1989electing,dobrev2004leader} examine symmetry breaking when nodes are labeled but $\id$ collisions occur. A more recent result closely related to our work is that of \citet{time-vs-info-16}, who studied the role of uniform advice in leader election. However, since their setting was the $\mathsf{LOCAL}$ \emph{message-passing} model, advice served only to accelerate leader election rather than enable it, leading to a series of trade-offs between locality and advice size. By contrast, if advice is non-uniform, leader election becomes trivial when only a simple ``yes/no'' answer is required. For stronger definitions of leader election~\cite{four-shades} and for a broader range of information dissemination problems, non-uniform advice, also known as informative labeling schemes \cite{informative-labeling}, has been studied.

Similar to the content-oblivious setting, various other weak-communication models have been studied to capture systems with constrained communication capabilities. A notable example is the synchronous \emph{beeping model} introduced by~\citet{beeping-model-10}, where during each round, nodes can broadcast a beep to their neighbors and can distinguish between silence and the presence of at least one beep from their neighbors.

In the beeping model, the first positive result for leader election was shown by~\citet{ghaffari2013near}, who also established an $\Omega(D+\log n)$-round lower bound. 
Subsequent works~\cite{forster2014deterministic,beeping-time-optimal-18,emek2021thin} gradually approached this lower bound. In parallel, the works of~\citet{10.1007/978-3-662-48653-5_3} and~\citet{beeping-minimal-weak-communication-25} focused on a different direction: Designing algorithms that operate on a constant number of states, at a cost of a slightly higher round complexity.

The \emph{population protocol} introduced by~\citet{angluin2006computation} studies massive systems of state machines under random pairwise interactions. Assuming each pair of agents has equal probability of interaction, a series of results~\cite{doty2018stable,10.5555/3039686.3039855,sudo2020leader} explores the trade-off between the number of states and the number of expected interactions for electing a leader. The works of~\citet{berenbrink2020optimal} and~\citet{9086071} proposed optimal leader elections under different state-size constraints. A recent work of~\citet{alistarh2022near} extended the study to cover the setting where only a subset of agent pairs may interact.

The \emph{leaf-peeling} technique is well known and has been used in the design of various tree algorithms. For example, \citet{peeling99} employed this approach to develop distributed algorithms for finding the centers and medians of a tree network.

\section{Preliminaries} 
\paragraph{Content-oblivious model} Let $G = (V, E)$ be a communication network, where each node $v\in V$ is a computing device of unbounded computation power, and each edge $e\in E$ is a communication link. Messages exchanged by nodes are subject to unbounded alteration noise, therefore, effectively contentless. Equivalently, we say computing devices only exchange \emph{pulses}. All pulses are subject to \emph{unpredictable, unbounded delay}, despite that an eventual delivery is always guaranteed, and nodes are aware of the incoming port of a pulse received. Therefore, a notion of shared time \emph{does not} provide additional power to the network: the content-oblivious network is \emph{asynchronous}, meaning a computing device may only perform actions upon initiation of any algorithm, or upon receiving a pulse. Without loss of generality, we assume all local computations to be instantaneous. When measuring the message complexity of an algorithm, we consider the total number of pulses sent.

\paragraph{Model variants}
Nodes in a content-oblivious network may either have unique positive integer $\id$s or be anonymous. In the latter case, the behavior of a node is completely determined by the pulses it has received so far. The algorithm itself may be deterministic or randomized; in the randomized setting, each node is given an \emph{infinite} and \emph{independent} stream of random bits.

\paragraph{Topology knowledge}
We say that nodes are provided with topology knowledge of $G$ if every node receives the same encoding of the topology of $G$ \emph{without} any $\id$ information, so that no node can \emph{immediately infer its position} in $G$ from identifiers alone. However, a node may still deduce partial information about its position from its degree. For example, if $G$ contains exactly two nodes of degree~3, then those two nodes can infer from the topology description that they must be among these two positions. When $G$ is a tree, the underlying topology can be represented succinctly in $O(n)$ bits (e.g., using a balanced-parentheses encoding), where $n$ is the number of nodes.



\paragraph{Leader election}
We study the \emph{leader election} problem, in which each node must decide between the outputs $\leader$ and $\nonleader$, and an execution of a leader election algorithm is successful if eventually exactly one node outputs $\leader$ while all others output $\nonleader$. We say that an algorithm \emph{stabilizes} if, after some finite time from the initiation of the algorithm, the outputs of all nodes remain fixed. This notion of stabilization is purely analytic and global: nodes are not required to detect that they have stabilized.

A stronger guarantee is \emph{termination}, which means that, after some finite time, each node \emph{commits} to an output and halts. Termination is said to be \emph{quiescent} if any node that has halted no longer receives pulses from its neighbors.

While stabilization ensures eventual global correctness, it requires nodes to continue executing and listening for pulses indefinitely. This behavior is undesirable in resource-constrained environments, such as those with limited energy or bandwidth, which often coincide with high-noise settings. In contrast, termination is locally detectable, strengthens correctness guarantees, and avoids the overhead associated with perpetual activity.

\section{Terminating Anonymous Leader Election on Trees}\label{sect:algorithm}

In this section, we first prove \Cref{thm:even_diameter_leader_election} by providing a terminating algorithm for even-diameter trees, where network diameter $D$ \textit{is provided to nodes a priori}. Following that, we generalize the above result to prove \Cref{thm:general_tree_leader_election} by providing a quiescently terminating algorithm where the underlying graph topology $G$ \textit{is provided to nodes a priori}.


\subsection{Terminating Anonymous Leader Election on Even-Diameter  Trees}

In this section, we show a terminating leader election in anonymous, even-diameter trees. This algorithm requires each node to know the diameter $D=2r$.

\mainthmeven*


We first provide a high-level overview of the algorithm. Each node performs a deterministic preprocessing phase locally using the information of $D$ to derive a set of rules that the node uses for the upcoming election phase. Nodes hence monitor the triggering conditions of all rules, and perform (sending pulses and outputting leader election results) accordingly as the rules dictate. The pulse movement follows a bottom-top-bottom pattern. Leaf nodes initiate the election by sending pulses to their neighbors, and each node monitors the number of pulses it receives from each edge, constantly checking them against the set of rules derived. The center, which is the node with the minimum eccentricity, is unique for even-diameter trees. It triggers a special \rulel~rule and terminates as leader. It then broadcasts a pulse so that every other node terminates as non-leader.

In the above high-level description, we mentioned nothing about what a rule looks like or how to interpret a rule. To understand that, let us first provide an intuition of what pulses encode in our leader election.

A certain number of pulses sent from node $u$ to node $v$ along the edge $e=\{u,v\}$ encodes the \emph{height of the subtree} rooted at $u$, that is, the connected component of $u$ in $G - e$. A subtree with height $i$, where $0\leq i \leq r-1$, is encoded by $r-i$ pulses. To be elected as leader, a node must receive at least one pulse from each neighbor. This seemingly upside-down encoding is to ensure two important properties: 
\begin{itemize}
    \item \textbf{The center is the only node that can become a leader.}
If we root the tree at any vertex \(v \neq l\), then there exists a root-to-leaf path whose length exceeds \(r\). Consequently, there is a subtree rooted at a child $u$ of $v$ whose height exceeds $r-1$. In this case, \(v\) can never receive a valid pulse from \(u\) encoding that subtree, and thus can never claim leadership. This ensures the \emph{correctness} of the leader election. 
    \item \textbf{Before a leader is elected, all pulses travel in the direction from leaves to the center.} This property ensures that after a leader is elected, it can broadcast a pulse that unambiguously signals that a leader has been elected, hence making all other pulses terminate as non-leaders, ensuring the \emph{termination} of all nodes.
\end{itemize}




We are now ready to formally describe the algorithm. 
Each node derives three categories of rules from the knowledge of $D=2r$: \ruleu, \rulel, and \ruled. A node keeps track of the number of received pulses on each port and compares against the rules upon each update. Initially, only \ruleu~and \rulel~rules are active. After a node triggers at least one of \ruleu~and \rulel~rules, \ruled~rules become active for that node.

\begin{mdframed}
\begin{itemize}
    \item \ruleu: There are $r$ copies of \ruleu, one for each $i\in[1,r]$. The $i$th rule is defined as follows: 

    Whenever a node with $d$ ports sees $d-1$ of its ports (that is, all except one) each receive at least $i+1$ pulses while the remaining port receives \emph{none}, mark the remaining port as \ps. Send pulses along \ps, until the \emph{total number of pulses} sent through \ps, since the start of the algorithm, is exactly $i$. Add \ruled~to the list of active rules, and remove the \rulel~rule.

    Note that if a node has one port only, it automatically triggers the \ruleu~rule constructed for all $i \in[1,r]$ upon initiation, resulting in sending $r$ pulses along its only port, which is marked \ps.
    
    \item \rulel: Whenever a node with $d$ ports sees each of its ports receive at least one pulse, send a pulse along all ports, declare $\leader$ and terminate.
    
    \item \ruled: When a pulse is received from $\mathsf{Port*}$, send a pulse along every port except for $\mathsf{Port*}$, declare $\nonleader$, and terminate.
\end{itemize}
\end{mdframed}

To facilitate the analysis, we consider the \emph{layer decomposition} resulting from the leaf-peeling process, defined as follows.

\begin{algorithm}
\DontPrintSemicolon
\caption{Layer decomposition} \label{alg: layer_decomp}
$i \gets 0$\;
$G_0 \gets G$\;
\While {$G_i$ is not empty}{
    $V_i \gets \text{set of leaf nodes of } G_i$\;
    $G_{i+1} \gets G_i \setminus V_i$\;
    $i \gets i+1$\;
}
\end{algorithm}

Since the radius of the network is \(r\), the last non-empty layer \(V_r\) contains exactly one node (the center of the tree), which we denote by \(l\) in the subsequent discussion.

\paragraph{Example} As a concrete example, consider a complete binary tree of radius $r$. The leader election proceeds as follows: First, by the \ruleu~rule, each node in $V_0$ (i.e., leaves) sends $r$ pulses to its parent, who is in $V_1$. Each node in $V_1$ therefore receives $r$ pulses from its neighbors in $V_0$, and by the \ruleu~rule, sends $r-1$ pulses to its parent, etc. Performing induction upwards, eventually the root receives at least one pulse from each of its neighbors, so it becomes the leader by triggering the \rulel~rule. After that, the \ruled~rule will be triggered top-down: the now elected leader sends a pulse to each of its children, and then each node that receives a pulse from its parent triggers the \ruled~rule, sending a pulse to each of its children, until the entire tree is reached.

\paragraph{Analysis} While the preceding example provides intuition for the correctness of the algorithm on all even-diameter trees, we now present a formal proof that the constructed rules indeed yield a terminating leader election procedure. 



\begin{observation}
    For each $v\in V_i$ with $0\leq i\leq r-1$, it has exactly one neighbor $u$ in $G$ that belongs to $\bigcup_{j > i} V_j$, while all other neighbors are in $\bigcup_{j < i} V_j$. For the unique node $l \in V_r$, it has all neighbors in $\bigcup_{j < r} V_j$.
\end{observation}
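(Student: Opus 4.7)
My plan is to argue directly from the leaf-peeling construction of \Cref{alg: layer_decomp}, using the elementary fact that a tree on at least three vertices has no two adjacent leaves. The key structural lemma I would first establish, by induction on $i$, is that every non-empty $G_i$ is itself a tree. This follows because $G_0 = G$ is a tree by hypothesis, and removing all leaves of a tree with $\geq 3$ vertices leaves a connected induced subgraph: any path in $G_i$ between two internal (non-leaf) vertices consists entirely of internal vertices, since every intermediate vertex on such a path has degree $\geq 2$ in $G_i$.

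For the main claim, fix $v \in V_i$ with $0 \leq i \leq r-1$. Since $v$ is a leaf of $G_i$, it has exactly one neighbor $u$ in $G_i$, while all of its other neighbors in $G$ must belong to $V(G)\setminus V(G_i) = \bigcup_{j < i} V_j$, which handles the ``other neighbors'' half of the statement. It then remains only to show $u \in \bigcup_{j > i} V_j$, equivalently $u \notin V_i$. I would argue by contradiction: if $u \in V_i$, then $u$ is also a leaf of $G_i$ adjacent to the leaf $v$, which by the preceding paragraph forces $G_i$ to consist of the single edge $\{u,v\}$. But then $V_i = \{u,v\}$ and $G_{i+1} = \emptyset$, so the layer decomposition terminates at index $i$, yielding $r = i$ and contradicting $i \leq r-1$.

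For the final case $v = l \in V_r$, the even-diameter assumption (already noted in the excerpt) guarantees $V_r = \{l\}$ and hence $G_r = \{l\}$, so every neighbor of $l$ in $G$ has already been removed in some earlier round and therefore lies in $\bigcup_{j < r} V_j$.

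I do not anticipate any significant obstacle: the entire argument is a short unpacking of definitions, with the only non-trivial ingredient being the ``no two adjacent leaves'' fact, which itself is a one-line observation (a leaf's unique neighbor cannot also be a leaf unless the tree has just two vertices). The one subtlety worth being careful about is the edge case $i = r-1$, where one must verify that the contradiction $r = i$ is indeed incompatible with $V_r$ being a singleton; this follows immediately from the definition of $r$ as the index of the last non-empty layer.
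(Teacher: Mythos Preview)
Your proposal is correct and follows essentially the same approach as the paper: both arguments observe that $v \in V_i$ is a leaf of $G_i$, so all other neighbors were peeled earlier, and then rule out the unique neighbor $u$ also lying in $V_i$ by noting this would force $G_{i+1} = \emptyset$, contradicting $i \le r-1$. You are simply more explicit than the paper about the intermediate facts (that each $G_i$ is a tree, and that two adjacent leaves force a two-vertex tree), which the paper's proof uses implicitly when it asserts ``otherwise $G_{i+1}$ would be empty.''
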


\begin{proof}
For any vertex $v \in V_i$, the vertex $v$ must be removed in the $i$th round of the layer decomposition. Thus, $v$ must be a leaf in $G_i$, which implies that all but possibly one of its neighbors in $G$ have already been removed prior to the $i$th round and therefore lie in $\bigcup_{j < i} V_j$.
    
If $0 \le i \le r-1$, then $v$ must have exactly one neighbor in $G_i$ that is not removed in the $i$th round. Otherwise, $G_{i+1}$ (and hence $V_{i+1}$) would be empty, implying that $V_i$ is the final non-empty layer, so $i = r$, contradicting $0 \le i \le r-1$. Therefore, $v$ has exactly one neighbor in $G$ that lies in $\bigcup_{j > i} V_j$. 
\end{proof}

\begin{definition} [Parent and children]
\label{def:parent and children}
    For each $v\in V_i$, define the (possible) neighbor of $v$ in $G$ belonging to $\bigcup_{j > i} V_j$ as the \ul{parent} of $v$. Define the neighbors of $v$ in $G$ belonging to $\bigcup_{j < i} V_j$ as the \ul{children} of $v$. 
\end{definition}

Clearly, the unique node $l \in V_r$ has no parent, and all nodes $v\in V_0$ have no children.

\begin{observation} \label{obs:one child in layer i-1}
    For each $v\in V_i$ with $1\leq i\leq r$, it has at least one child that belongs to $V_{i-1}$.
\end{observation}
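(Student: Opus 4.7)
The plan is to use a simple degree-counting argument that contrasts the degree of $v$ in two consecutive graphs of the layer decomposition, $G_{i-1}$ and $G_i$. First, I would observe that since $v \in V_i$, the node $v$ is a leaf of $G_i$, so $\deg_{G_i}(v) \leq 1$; for the boundary case $v = l$ with $i = r$, the graph $G_r$ reduces to $\{l\}$ and $\deg_{G_r}(l) = 0$, which still satisfies $\deg_{G_i}(v) \leq 1$. Next, since $v \in V_i$ means $v \notin V_{i-1}$, yet $v$ still lies in $G_{i-1}$ (because $G_i \subseteq G_{i-1}$), the vertex $v$ was not a leaf of $G_{i-1}$, so $\deg_{G_{i-1}}(v) \geq 2$.

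With these two bounds in hand, the remaining step is to note that the vertex set deleted in passing from $G_{i-1}$ to $G_i$ is exactly $V_{i-1}$, so the drop in $v$'s degree counts precisely the neighbors of $v$ that lie in $V_{i-1}$. That is,
\[
|N_G(v) \cap V_{i-1}| \;=\; \deg_{G_{i-1}}(v) - \deg_{G_i}(v) \;\geq\; 2 - 1 \;=\; 1,
\]
so $v$ has at least one neighbor in $V_{i-1}$. Since $V_{i-1} \subseteq \bigcup_{j<i} V_j$, this neighbor qualifies as a child of $v$ by \Cref{def:parent and children}, which finishes the argument.

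I do not anticipate a substantive obstacle here: the statement is an almost immediate consequence of how the leaf-peeling layer decomposition is defined, and the only mildly delicate point is the boundary case $i = r$, where the ``leaf'' condition on $v = l$ really means $l$ has become isolated rather than a degree-$1$ vertex, but the inequality $\deg_{G_r}(l) = 0 \leq 1$ is even more comfortable for the same counting.
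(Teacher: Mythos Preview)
Your proof is correct and takes essentially the same approach as the paper: both hinge on the fact that $v$ is \emph{not} a leaf in $G_{i-1}$ but \emph{is} one in $G_i$, so at least one neighbor of $v$ must have been peeled off in round $i-1$. The paper phrases this as a two-line contradiction, while you unpack it as an explicit degree count, but the underlying idea is identical.
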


\begin{proof}
    For the sake of contradiction, assume that all neighbors of $v$ in $\bigcup_{j < i} V_j$ also lie in $\bigcup_{j < i-1} V_j$. This would imply that $v$ is a leaf in $G_{i-1}$ and therefore should have been removed in round $i-1$, placing it in $V_{i-1}$. This contradicts the assumption that $v \in V_i$.
\end{proof}



\begin{lemma} \label{lem: l two children}
    Node $l$ has at least two children in $V_{r-1}$.
\end{lemma}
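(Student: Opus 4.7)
The plan is to argue by contradiction using the layer decomposition (\Cref{alg: layer_decomp}). The central observation is that $l$ must have degree at least $2$ in $G_{r-1}$, because otherwise $l$ would have been removed already in round $r-1$ and thus would lie in $V_{r-1}$, contradicting $l \in V_r$.

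First, I would spell out the correspondence between the degree of $l$ in $G_{r-1}$ and the number of its children in $V_{r-1}$. By \Cref{def:parent and children}, every child of $l$ is simply a neighbor of $l$ in the original graph $G$, since $\bigcup_{j<r} V_j = V\setminus\{l\}$. A neighbor $u$ of $l$ survives in $G_{r-1}$ if and only if it has not yet been peeled off, i.e., $u \in V_j$ for some $j \ge r-1$. Since $V_r=\{l\}$ contains no child of $l$, this is equivalent to $u \in V_{r-1}$. Hence the degree of $l$ in $G_{r-1}$ equals the number of children of $l$ that lie in $V_{r-1}$.

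Next, I would invoke \Cref{obs:one child in layer i-1} with $i = r$ to guarantee that $l$ has at least one child in $V_{r-1}$; this rules out the degenerate possibility that $l$ is an isolated vertex of $G_{r-1}$. If $l$ had exactly one such child, then $l$ would have degree exactly $1$ in $G_{r-1}$ and would therefore be a leaf of $G_{r-1}$, placing $l$ in $V_{r-1}$ and contradicting $l \in V_r$. Consequently, $l$ has at least two children in $V_{r-1}$.

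I do not anticipate any significant obstacle here. The only care required is the bookkeeping that translates ``the round in which a neighbor of $l$ is peeled'' into ``the layer to which that neighbor belongs,'' which follows immediately from \Cref{def:parent and children} and \Cref{obs:one child in layer i-1}. Everything else is routine.
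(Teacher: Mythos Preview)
Your proof is correct and follows essentially the same contradiction argument as the paper: both show that if $l$ had at most one child in $V_{r-1}$, then $l$ would be a leaf of $G_{r-1}$ and hence belong to $V_{r-1}$ rather than $V_r$. Your version is in fact slightly more direct, since you argue via the degree of $l$ in $G_{r-1}$ without the paper's intermediate step of observing that $G_{r-1}$ is a two-node path.
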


\begin{proof}
Assume otherwise that $l$ has only one child in $V_{r-1}$. Then both $V_{r-1}$ and $V_r$ contain exactly one node. Consequently, $G_{r-1}$ is a path of two nodes. This is impossible, since in such a case both nodes would be removed in round $r-1$, contradicting the existence of a node in $V_r$.
\end{proof}


\begin{restatable}{observation}{portconstant} \label{obs: port* constant}
    Once a node triggers an \textbf{Upstream} rule, it never changes $\mathsf{Port*}$.
\end{restatable}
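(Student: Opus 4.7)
The plan is to prove the observation by a short contradiction argument that exploits the monotonicity of received pulse counts: a port that has already received some pulses can never again be the ``zero-pulse port'' required by the triggering condition of another Upstream rule. No additional structural lemmas about the tree or the layer decomposition should be needed; the argument is entirely local to the node $v$ in question.

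Concretely, I would set things up as follows. Suppose $v$ has $d$ ports and first triggers the Upstream rule for some index $i_1 \in [1,r]$ at time $t_1$, marking some port $p$ as $\mathsf{Port*}$. By the triggering condition, at time $t_1$ each of the $d-1$ ports other than $p$ has received at least $i_1+1 \geq 2$ pulses, while $p$ itself has received $0$ pulses. For the value of $\mathsf{Port*}$ to change, there must exist a later time $t_2 > t_1$ at which some Upstream rule (say for index $i_2 \in [1,r]$) triggers with a different remaining port $p' \neq p$.

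The key step is then immediate from monotonicity. By the triggering condition at $t_2$, port $p'$ must have received $0$ pulses at time $t_2$. However, $p'$ was one of the $d-1$ non-remaining ports at $t_1$, and therefore had already received at least $i_1 + 1 \geq 2$ pulses by time $t_1$. Since pulse counts per port are monotonically non-decreasing in time (a received pulse is never un-received), $p'$ still has at least $2$ received pulses at $t_2$, contradicting the required count of $0$. Hence no later Upstream trigger can designate a different $\mathsf{Port*}$, and $\mathsf{Port*}$ remains fixed.

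I do not foresee any real obstacle here: the only subtle point is to be clear that the observation concerns the \emph{identity} of $\mathsf{Port*}$, not the number of times Upstream fires. A later Upstream rule with the \emph{same} remaining port $p$ is perfectly consistent with the statement, since it merely reaffirms the existing choice of $\mathsf{Port*}$ (and, per the rule's body, possibly extends the number of pulses sent along $p$). The argument should be written to make this distinction explicit so that the contradiction is cleanly drawn only in the case $p' \neq p$.
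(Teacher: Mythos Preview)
Your proposal is correct and takes essentially the same approach as the paper: both argue that a port designated as $\mathsf{Port*}$ must have zero received pulses at that moment, while every other port already has a positive count, and monotonicity of pulse counts then forbids any other port from later (or earlier) satisfying the zero-pulse requirement. The paper phrases it by looking backward from a hypothetical new assignment, whereas you look forward from the first assignment, but the underlying idea is identical.
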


\begin{proof}
    To assign a port (without loss of generality, port 0) as \ps, a node must trigger an \ruleu\ rule at a time when port 0 has not yet received any pulses. Hence, no other port can have been assigned \ps\ earlier, since doing so would require port 0 to have already received a nonzero number of pulses in order to trigger some \ruleu\ rule.  
\end{proof}

\begin{lemma} \label{lem: at most r-i pulses to parent}
    If a node $u$ is in $V_i$, $i \in [0,r-1]$, then $u$ sends at most $r-i$ pulses to its parent.
\end{lemma}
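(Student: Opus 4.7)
The plan is to prove this by induction on the layer index $i$, propagating the bound from layer $i-1$ to layer $i$ via the \ruleu~trigger condition. The base case $i = 0$ is immediate: $u$ is a leaf with a single port, so for every $k \in [1, r]$ the \ruleu$_k$ rule is vacuously satisfied at initiation (the ``$d-1$ non-\ps~ports'' form the empty set while the sole port has $0$ pulses). Thus $u$ sends exactly $r = r - 0$ pulses along its unique port, which is precisely the port to its parent.

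For the inductive step, fix $i \in [1, r-1]$ and assume the lemma for all layers $j < i$. Let $s$ denote the number of pulses $u$ sends to its parent $p$. I would first observe that \rulel~and each \ruleu$_k$ have mutually exclusive firing conditions---one needs every port to carry $\ge 1$ pulse, the other needs some port to carry $0$---and that \ruleu~removes \rulel~from the active set once it fires. Hence exactly one of the two types of rule fires first for $u$. If \rulel~fires first, $u$ sends a single pulse to every neighbor and $s = 1 \le r - i$. If \ruleu~fires first but \ps~is not the port to $p$, every \ruleu~pulse leaves along \ps~(i.e., toward a child), and the only pulse that can ever reach $p$ from $u$ is the one generated if \ruled~subsequently fires; so again $s \le 1 \le r - i$.

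The substantive case is \ruleu~firing first with \ps~equal to the port to $p$. Here every pulse from $u$ to $p$ comes from a \ruleu~trigger, so $s$ equals the largest index $k_{\max}$ for which \ruleu$_{k_{\max}}$ fires. The firing condition forces each of the $d_u - 1$ non-\ps~ports (all of which lead to children of $u$) to carry $\ge k_{\max} + 1$ pulses at that instant. By Observation \ref{obs:one child in layer i-1}, $u$ has at least one child $c \in V_{i-1}$, and since the parent of $c$ is by definition its unique neighbor in $\bigcup_{j \ge i} V_j$, that parent must be $u$. Applying the inductive hypothesis to $c$, the total number of pulses $c$ ever sends to $u$ is at most $r - (i-1) = r - i + 1$, so $k_{\max} + 1 \le r - i + 1$, giving $s = k_{\max} \le r - i$.

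I expect the main obstacle to be disentangling the rule interactions cleanly---in particular, ruling out that \ruleu, \rulel, and \ruled~could cooperate to inject extra pulses into the edge $\{u, p\}$ beyond what the case analysis above accounts for. By contrast, the combinatorial content---that some child of $u$ lies in $V_{i-1}$, so that at least one child port carries few pulses---is already packaged in Observation \ref{obs:one child in layer i-1}, and Observation \ref{obs: port* constant} ensures that \ps~is stable once assigned, so the induction itself goes through smoothly.
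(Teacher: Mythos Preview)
Your proof is correct and follows essentially the same approach as the paper: bottom-up induction over layers, with the main case handled via \Cref{obs:one child in layer i-1} to locate a child in $V_{i-1}$ whose pulse bound (by induction) caps the largest \ruleu\ index that can trigger. The only difference is cosmetic: the paper dispatches the non-main cases with ``otherwise, the lemma holds trivially for $u$,'' whereas you spell out explicitly why \rulel\ or an \ruleu\ with \ps\ not pointing to $p$ yields at most one pulse to $p$ (which is indeed $\le r-i$ since $i \le r-1$).
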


\begin{proof}    
 We proceed by induction over the layers from $V_0$ to $V_r$, i.e., in a bottom-up manner.

\paragraph{Base case}
Let $u$ be a leaf node in $V_0$. By the definition of \textbf{Upstream}, $u$ sends $r$ pulses to its parent immediately upon initialization and never sends any more thereafter. Thus, the lemma holds for $u$.

\paragraph{Induction step}
Assume the lemma holds for all nodes in layers up to $V_s$ for some $s \in [0, r-2]$. Let $u$ be a node in $V_{s+1}$, and let $v$ be its parent.

Suppose that $u$ sets $\mathsf{Port*}$ to point toward $v$ (otherwise, the lemma holds trivially for $u$). By \Cref{obs:one child in layer i-1}, $u$ has exactly one child $m$ in layer $V_s$. By the induction hypothesis, $m$ sends at most $r - s$ pulses to $u$. By the definition of the \ruleu\ rules, $u$ can therefore send at most $r - s - 1$ pulses through \ps\ to its parent $v$.
\end{proof}

\begin{lemma} \label{lem: even diameter leader cannot upstream}
    Node $l$ cannot trigger any \ruleu~rule.
\end{lemma}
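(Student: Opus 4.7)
The plan is to derive a contradiction by combining \Cref{lem: l two children} with \Cref{lem: at most r-i pulses to parent}, showing that the pulse counts arriving at $l$ can never simultaneously meet the triggering condition of any \ruleu\ rule.

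First, I would observe that since $l \in V_r$ has no parent, every neighbor of $l$ is a child, and each such child lies in some layer $V_j$ with $j \le r-1$. By \Cref{lem: l two children}, at least two of these children actually lie in $V_{r-1}$; call them $u_1$ and $u_2$, and let $p_1, p_2$ be the ports of $l$ connecting to them. Applying \Cref{lem: at most r-i pulses to parent} with the layer index $r-1$, each of $u_1$ and $u_2$ sends at most $r - (r-1) = 1$ pulse to $l$ over the entire execution. Hence $p_1$ and $p_2$ each accumulate at most $1$ pulse at every moment in time.

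Next, I would examine the triggering condition of a generic \ruleu-$i$ rule with $i \in [1,r]$: out of $l$'s $d$ ports, exactly one (the ``remaining'' port) must hold $0$ pulses, while the other $d-1$ ports must each hold at least $i+1 \ge 2$ pulses. In particular, at the moment the rule fires, at most one port of $l$ may hold fewer than $2$ pulses. But $p_1$ and $p_2$ are two distinct ports that always hold at most $1$ pulse each; so at least one of $p_1, p_2$ would have to be counted among the $d-1$ high-count ports, which is impossible. Thus no \ruleu-$i$ rule can ever fire at $l$.

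The core of the argument is essentially bookkeeping, and I expect no deep obstacle. The only point requiring care is the index alignment: I must verify that \Cref{lem: at most r-i pulses to parent} is applicable to nodes in $V_{r-1}$ (it is, since its range covers $i \in [0, r-1]$) and that the resulting per-port bound of $1$ is strictly smaller than the minimum Upstream threshold $i+1 = 2$ obtained when $i = 1$. Everything else — that $l$ has no parent, that the two $V_{r-1}$ children give two distinct ports, and that only one port may be the ``remaining'' one — follows directly from the preceding observations and definitions.
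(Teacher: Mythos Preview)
Your proposal is correct and follows essentially the same approach as the paper: combine \Cref{lem: l two children} with \Cref{lem: at most r-i pulses to parent} to obtain two ports of $l$ that can never exceed one pulse, and then observe that every \ruleu\ rule requires all but one port to hold at least $i+1 \ge 2$ pulses. The paper's proof is simply a terser version of your argument.
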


\begin{proof}
    By \Cref{lem: l two children}, node $l$ has at least two children in $V_{r-1}$, each sending at most one pulse to $l$ by \Cref{lem: at most r-i pulses to parent}. By the definition of \ruleu, $l$ cannot trigger any \ruleu~rule.
\end{proof}

\begin{lemma} \label{lem: even diameter send to parent before receiving}
    If a node $u$ is in $V_i$ with $i \in [0,r-1]$, then $u$ does not receive any pulses from its parent $v$ before sending one to $v$.
\end{lemma}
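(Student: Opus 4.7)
I would argue by contradiction. View the tree as rooted at the unique center $l$, so that the parent-child relation of \Cref{def:parent and children} coincides with the usual rooted-tree one. For a node $x$ with neighbor $w$, let $p_w$ denote the port of $x$ facing $w$ (the base node $x$ will be clear from context). Call a send event \emph{uninvited} if some node $v$ sends a pulse to its child $u$ at a moment when $u$ has never sent any pulse to $v$. If the lemma fails, then in some execution at least one uninvited event occurs; fix the earliest such event, at time $t^\star$, with $v$ sending to its child $u$.

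Next, I would case-split on the rule at $v$ that triggers this send. The \rulel~rule would require every port of $v$ (in particular $p_u$) to have received at least one pulse, contradicting the uninvited condition. The \ruled~rule would require $\mathsf{Port*} \neq p_u$ (otherwise no pulse is sent to $u$), but the prior \ruleu~that set $\mathsf{Port*}$ would have required $p_u$ (a non-$\mathsf{Port*}$ port) to already hold at least $k+1 \geq 2$ pulses from $u$, again contradicting that $u$ has not sent by $t^\star$. Hence the only surviving possibility is an \ruleu~rule with $\mathsf{Port*} = p_u$ and some parameter $k$, and at $t^\star$ every port of $v$ other than $p_u$ has received at least $k+1 \geq 2$ pulses while $p_u$ has received $0$.

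To eliminate this last case, I would bootstrap up one level. By \Cref{lem: even diameter leader cannot upstream}, $v$ cannot be $l$, so $v$ has a parent $v'$. Because $p_{v'}$ has received at least two pulses at $v$ by $t^\star$, the node $v'$ sent its first pulse to its child $v$ at some time $t_0 < t^\star$. If $v$ has not sent to $v'$ by $t_0$, then that send is itself uninvited and strictly earlier than $t^\star$, contradicting minimality. Otherwise $v$'s first send to $v'$ occurred at some $t_v < t_0$, and I would enumerate the triggering rule at $v$: \rulel~is impossible since $p_u$ still holds $0$ pulses at $t_v < t^\star$; by \Cref{obs: port* constant}, every \ruleu~triggered by $v$ has the same $\mathsf{Port*}$, which must equal $p_u$ (matching the $t^\star$ trigger) and thus sends only to $u$, never to $v'$; and \ruled~at $v$ fires on a pulse received at $\mathsf{Port*} = p_u$, requiring $u$ to have already sent to $v$. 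Every case collapses, so $t_0$ is an uninvited event earlier than $t^\star$, a contradiction.

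The main obstacle is the \ruleu-with-$\mathsf{Port*} = p_u$ case, which is the one scenario where $v$'s local behavior is internally consistent and therefore forces a recursive step up to $v$'s parent. The loop is closed by two ingredients from earlier in the paper: \Cref{lem: even diameter leader cannot upstream} guarantees $v \neq l$ so that a parent $v'$ exists, and \Cref{obs: port* constant} fixes $v$'s $\mathsf{Port*}$ as $p_u$, ruling out every alternative mechanism by which $v$ could have sent toward $v'$.
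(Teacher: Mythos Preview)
Your proof is correct and uses essentially the same core argument as the paper: both reduce to the case where $v$ triggers an \ruleu\ rule with $\mathsf{Port*}$ toward $u$, then bootstrap to $v$'s parent and invoke \Cref{obs: port* constant} to derive the contradiction. The only difference is organizational---the paper frames this as a top-down induction over layers (with base case at $V_{r-1}$ handled via \Cref{lem: even diameter leader cannot upstream}) while you use an earliest-in-time counterexample; your version is actually slightly more explicit in dispatching the \ruled\ case, which the paper leaves implicit.
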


\begin{proof}
    We proceed by induction over the layers from $V_{r-1}$ to $V_0$, i.e., in a top-down manner.

    \paragraph{Base case} Consider any $v \in V_{r-1}$, whose parent is $l$. Suppose, for the sake of contradiction, that $v$ receives a pulse from $l$ before sending any pulse to $l$. Such a pulse cannot result from a triggered \ruleu\ rule at $l$, by \Cref{lem: even diameter leader cannot upstream}. It also cannot arise from a triggered \rulel\ rule, since that rule requires $l$ to have received pulses from \emph{all} of its ports, yet $v$ has not sent any pulse to $l$. This shows that $v$ cannot receive a pulse from $l$ before sending one. 

    \paragraph{Induction step} Assume that the lemma holds for all nodes in $V_s$ and higher layers for some $s \in [1, r-1]$. Let $u$ be a node in $V_{s-1}$ and $v$ its parent. Suppose, for the sake of contradiction, that $u$ receives a pulse from $v$ before sending one to $v$. Similar to the base case, $v$ cannot trigger the \rulel\ rule. Thus, it remains to rule out the possibility that $v$ triggers an \ruleu\ rule and sets the port pointing to $u$ as \ps.

For $v$ to trigger an \ruleu\ rule, it must have received a nonzero number of pulses from its parent $w$. By the induction hypothesis, $v$ must have sent pulses to $w$ before receiving any, and such pulses can only result from an \ruleu\ rule with the port pointing to $w$ designated as \ps. This, however, contradicts \Cref{obs: port* constant}. Hence, the assumed scenario is impossible, and $u$ also satisfies the lemma.
\end{proof}

The following three lemmas follow directly from \Cref{lem: even diameter send to parent before receiving}. 

\begin{restatable}{lemma}{onlylmaybeleader} \label{lem: even diameter only l may be leader}
    The only node that can trigger the \rulel~rule is $l$.
\end{restatable}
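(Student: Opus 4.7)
The plan is to show that any node $v \neq l$ is blocked from triggering the \rulel\ rule by the time it could conceivably satisfy the rule's preconditions. The key observation is that the \rulel\ rule requires $v$ to have received at least one pulse on \emph{every} one of its ports, including the port pointing to its parent; but by \Cref{lem: even diameter send to parent before receiving}, before $v$ receives any pulse from its parent, $v$ must already have sent a pulse to its parent. This forced order of events is what I will exploit.

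More concretely, I would argue as follows. Fix any $v \neq l$, so $v \in V_i$ for some $i \in [0, r-1]$ and $v$ has a well-defined parent $p$ (by \Cref{def:parent and children}). Suppose, toward a contradiction, that $v$ triggers the \rulel\ rule at some time $t$. At time $t$, in particular, $v$ has received at least one pulse on the port leading to $p$. By \Cref{lem: even diameter send to parent before receiving}, at some earlier time $t' < t$, $v$ must have sent a pulse to $p$.

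Now the only rule in the algorithm that can cause $v$ to send a pulse while \rulel\ is still active is an \ruleu\ rule (since the \ruled\ rule only becomes active after an \ruleu\ rule has fired). Hence at time $t'$, the node $v$ must have triggered some \ruleu\ rule, and the port pointing to $p$ is marked \ps. But the specification of the \ruleu\ rules explicitly states that upon firing, the \rulel\ rule is \emph{removed} from the list of active rules at $v$. Therefore $v$ cannot trigger \rulel\ at any time $\ge t'$, contradicting our assumption that $v$ triggers \rulel\ at time $t > t'$.

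Thus no node $v \neq l$ can ever trigger \rulel, which is exactly the statement of the lemma. I do not foresee any substantial obstacle here, since all the work has already been done in \Cref{lem: even diameter send to parent before receiving} and in the explicit specification of the rule-update semantics; the present lemma is essentially a clean bookkeeping consequence of those.
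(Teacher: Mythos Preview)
Your proof is correct and takes essentially the same approach as the paper's. Both arguments invoke \Cref{lem: even diameter send to parent before receiving} to force any $v \neq l$ to have sent a pulse to its parent before it could possibly satisfy the \rulel\ precondition, and then observe that this earlier send must have been via an \ruleu\ rule, which disables \rulel. The paper compresses this into a single sentence (``to trigger \rulel, a node must receive pulses from all neighbors before sending any''), while you spell out explicitly why the send at time $t'$ can only come from an \ruleu\ rule; but the logic is identical.
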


\begin{proof}
    To trigger the \rulel~rule, a node $v$ must receive pulses from all neighbors before sending any. For any node $v$ with a parent, the above is impossible due to \Cref{lem: even diameter send to parent before receiving}, so the only node that can trigger the \rulel~rule is $l$, as it is the only node without a parent.
\end{proof}

\begin{restatable}{lemma}{portpointsparent} \label{lem: even diameter port* points to parent}
    For each node $v$ that has a parent, its \ps\ (if any) may only point to its parent.
\end{restatable}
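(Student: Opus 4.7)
The plan is to examine the first moment (if any) at which $v$ triggers an \ruleu\ rule, show that the port it designates as \ps\ at that instant must be the one pointing to its parent $w$, and then invoke \Cref{obs: port* constant} to extend the conclusion to all subsequent times.

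The first step is the following auxiliary observation: before $v$ fires its first \ruleu\ rule, it has sent no pulses on any port. Indeed, the only rules that cause a node to send pulses are \ruleu, \rulel, and \ruled. The \ruled\ rule only becomes active after an \ruleu\ trigger, so it cannot have fired earlier. The \rulel\ rule causes the node to terminate immediately, so if it had fired, no subsequent \ruleu\ could take place. Hence at the exact instant $v$ fires its first \ruleu\ rule, $v$ has not yet sent a pulse to $w$.

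Next, I would apply \Cref{lem: even diameter send to parent before receiving}: since $v \notin V_r$, the node $v$ cannot receive any pulse from $w$ before first sending one to $w$. Combined with the previous observation, the port of $v$ pointing to $w$ has received \emph{zero} pulses at the instant of the first \ruleu\ trigger. The triggering condition of \ruleu\ (for some $i \in [1,r]$) stipulates that all ports except \ps\ have received at least $i+1 \geq 2$ pulses, while the port marked \ps\ has received none. Consequently, the port pointing to $w$ must be the unique one marked as \ps\ at that instant.

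Finally, \Cref{obs: port* constant} ensures that once assigned, \ps\ never changes, so it continues to point to $w$ for the rest of the execution. I expect the only subtlety to be the careful bookkeeping in the first auxiliary observation, which requires the interplay of activation/deactivation of the three rule categories; the remaining two steps are direct applications of previously established results.
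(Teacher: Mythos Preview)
Your proof is correct and close in spirit to the paper's, but the two differ in structure. The paper argues by contradiction: it assumes \ps\ points to a child, deduces that $v$ received a pulse from its parent, invokes \Cref{lem: even diameter send to parent before receiving} to conclude $v$ must have sent earlier, rules out \rulel\ via \Cref{lem: even diameter only l may be leader}, and reaches a contradiction with \Cref{obs: port* constant}. Your argument is direct: by analyzing the rule activation/termination logic you establish that at the \emph{first} \ruleu\ trigger no pulse has yet been sent, so by \Cref{lem: even diameter send to parent before receiving} the parent port is still at zero and must therefore be the one designated \ps. Your route is slightly more self-contained in that it does not need \Cref{lem: even diameter only l may be leader}; the paper's route is a bit shorter on the page because it offloads the ``no prior send'' reasoning to that lemma. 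Both rely on the same two core ingredients (\Cref{lem: even diameter send to parent before receiving} and \Cref{obs: port* constant}), so the difference is one of presentation rather than substance.
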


\begin{proof}
Suppose \ps\ points to a child. For this to occur, $v$ must have triggered an \ruleu\ rule in which it received pulses from its parent. By \Cref{lem: even diameter send to parent before receiving}, $v$ must then have previously sent a pulse to its parent, and this pulse cannot be the result of the \rulel\ rule by \Cref{lem: even diameter only l may be leader}. Thus, $v$ must have triggered an \ruleu\ rule and sent pulses to its parent earlier, which contradicts \Cref{obs: port* constant}.  
\end{proof}

\begin{restatable}{lemma}{leaderbeforedownstream} \label{lem: even diameter leader before downstream}
    No \ruled~rule is triggered before $l$ triggers the \rulel~rule.
\end{restatable}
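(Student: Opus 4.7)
The plan is to argue by contradiction, taking the \emph{earliest} moment at which any \ruled{} rule fires. Suppose for contradiction that some \ruled{} rule is triggered before $l$ triggers \rulel. Let $t$ be the first time at which any \ruled{} rule is executed, and let $v$ be the node that executes it. By the definition of \ruled, $v$ must have received a pulse on its \ps{} port at time $t$. Crucially, since the minimality of $t$ means $v$ itself had not yet triggered \ruled{} before $t$, and since \rulel{} requires pulses from \emph{every} port (and $l$ has not triggered \rulel{} yet by assumption), the only way $v$ could have a \ps{} is via an earlier \ruleu{} trigger. In particular $v$ has a parent $p$, and by \Cref{lem: even diameter port* points to parent}, $v$'s \ps{} points to $p$. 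So some earlier time $t' < t$ exists at which $p$ sent a pulse to $v$.

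Next I would perform a three-way case analysis on the rule that $p$ executed to produce the pulse sent at time $t'$. If $p$ triggered \rulel, then by \Cref{lem: even diameter only l may be leader} we must have $p = l$, which directly contradicts the hypothesis that $l$ has not yet fired \rulel{} by time $t > t'$. If $p$ triggered \ruleu{} to send that pulse, then by the definition of \ruleu{} the port at $p$ pointing toward $v$ was designated as $p$'s \ps; but $v$ is a child of $p$, so this contradicts \Cref{lem: even diameter port* points to parent} when $p$ has a parent, and contradicts \Cref{lem: even diameter leader cannot upstream} when $p = l$. Finally, if $p$ triggered a \ruled{} rule at time $t'$, then $t' < t$ is an earlier \ruled{} event, contradicting the minimality of $t$. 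Each case yields a contradiction, establishing the lemma.

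I do not anticipate a serious obstacle here; the lemma is essentially a clean composition of the three preceding structural results (\Cref{lem: even diameter leader cannot upstream}, \Cref{lem: even diameter only l may be leader}, and \Cref{lem: even diameter port* points to parent}) via a ``first violating event'' argument. The only subtle point is justifying that the ``earliest \ruled{} trigger'' is well-defined in the asynchronous model, which follows because the set of events preceding any finite time is finite and well-ordered; an equivalent way is to induct on the number of pulses delivered so far. Once that is noted, the case analysis is mechanical.
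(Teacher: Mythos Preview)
Your proof is correct and takes a genuinely different route from the paper. The paper argues by top-down induction on layers: it shows $l$ never activates \ruled{} (base case), and then for $u\in V_{s-1}$ with parent $v$, any \ruled{} trigger at $u$ forces a prior \ruled{} trigger at $v$, which by the induction hypothesis already implies $l$ has fired \rulel. Your argument instead picks the \emph{earliest} \ruled{} event globally and derives a contradiction by case analysis on the rule that produced the triggering pulse. Both rely on the same three structural lemmas; your approach avoids the layer machinery and is arguably more direct, while the paper's induction mirrors the pattern used in the surrounding lemmas and makes the ``pulses only flow toward $l$'' picture more explicit. One small point to tighten: your inference ``in particular $v$ has a parent'' from ``$v$ triggered \ruleu'' tacitly uses \Cref{lem: even diameter leader cannot upstream} (to rule out $v=l$); you invoke that lemma later in the \ruleu{} case for $p$, but it is also doing work at this earlier step.
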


\begin{proof}
    We proceed by induction over the layers from $V_{r}$ to $V_0$, i.e., in a top-down manner.

    \paragraph{Base case} The unique node $l$ in $V_r$ never triggers any \ruleu\ rule (\Cref{lem: even diameter leader cannot upstream}), and therefore \ruled\ is never activated at $l$.


    \paragraph{Induction step} Assume that the lemma holds for all nodes in $V_s$ and higher layers for some $s \in [1, r]$. Let $u$ be a node in $V_{s-1}$ and let $v$ be its parent. For $u$ to trigger \ruled, it must receive a pulse from \ps, which, by \Cref{lem: even diameter port* points to parent}, means a pulse from its parent $v$. Such a pulse cannot result from $v$ triggering \ruleu\ or \rulel, by \Cref{lem: even diameter port* points to parent} and \Cref{lem: even diameter only l may be leader}, respectively. Hence, $v$ must have triggered \ruled, and by the induction hypothesis, this implies that \rulel\ has already been triggered by $l$.
\end{proof}

Combining the above three lemmas, we see that \emph{the direction of pulse movement is always from children to parents}, before $l$ triggers the \rulel~rule.

\begin{restatable}{lemma}{pulsedirection} \label{lem: even diameter pulse direction}
    The direction of pulse movement is always from children to parents, before $l$ triggers the \rulel~rule.
\end{restatable}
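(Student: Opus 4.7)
The plan is to do a simple case analysis on what can cause a pulse to be sent, then invoke the three preceding lemmas to eliminate all cases except the child-to-parent one. A pulse is emitted in exactly one of three circumstances: (i) a node triggers \ruleu{} and sends along \ps, (ii) a node triggers \rulel{} and sends along every port, or (iii) a node triggers \ruled{} and sends along every port except \ps. So it suffices to argue that, strictly before $l$ triggers \rulel, every pulse is the result of case (i) at some node $v \ne l$, and moreover the \ps{} of that $v$ points to its parent.

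First I would rule out cases (ii) and (iii) during the time window in question. For (ii), \Cref{lem: even diameter only l may be leader} tells us that $l$ is the only candidate to trigger \rulel, and by hypothesis this has not yet happened; so no \rulel{} rule fires before $l$ does. For (iii), \Cref{lem: even diameter leader before downstream} shows that no \ruled{} rule fires before $l$ triggers \rulel. Hence every pulse emitted before $l$ triggers \rulel{} must come from an \ruleu{} rule.

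Next I would identify which node can emit such a pulse. By \Cref{lem: even diameter leader cannot upstream}, $l$ itself never triggers any \ruleu{} rule, so the emitting node is some $v \ne l$, which therefore has a parent in the sense of \Cref{def:parent and children}. Pulses produced by \ruleu{} are sent along \ps{} of the emitting node, and by \Cref{lem: even diameter port* points to parent} the \ps{} of any non-$l$ node, if defined, points to its parent. Thus the pulse travels from $v$ to its parent, establishing the claim.

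The ``hard part'' here is essentially nonexistent: all the real work was done in the three preceding lemmas, which established (a) that $l$ is the only possible leader, (b) that \ps{} for non-$l$ nodes points upward, and (c) that downstream propagation cannot begin prematurely. The present lemma is a packaging step that assembles them into a single clean statement about pulse direction, which will presumably be invoked repeatedly in the termination and message-complexity analysis that follows.
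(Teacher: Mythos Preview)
Your proposal is correct and follows essentially the same approach as the paper: both do a case analysis on the three rule types, invoke \Cref{lem: even diameter only l may be leader} and \Cref{lem: even diameter leader before downstream} to rule out \rulel{} and \ruled{} before $l$ fires \rulel, and then use \Cref{lem: even diameter port* points to parent} to conclude that \ruleu{} pulses go to the parent. Your explicit appeal to \Cref{lem: even diameter leader cannot upstream} to handle $l$ itself is a small addition the paper leaves implicit, but the argument is otherwise identical.
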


\begin{proof}
By \Cref{lem: even diameter only l may be leader}, no node other than $l$ ever triggers the \rulel\ rule. Moreover, by \Cref{lem: even diameter leader before downstream}, no node other than $l$ triggers the \ruled\ rule before $l$ triggers the  \rulel\ rule. Thus, the \ruleu\ rules are the only rules that nodes other than $l$ may trigger before $l$ triggers the \rulel\ rule. By \Cref{lem: even diameter port* points to parent}, any pulse generated by such a \ruleu\ rule always travels toward the parent of the sender.
\end{proof}

\begin{lemma} \label{lem: even diameter tree success}
    Node $l$ eventually terminates as $\leader$ by triggering the \rulel~rule, after every other node sets \ps~to its parent.
\end{lemma}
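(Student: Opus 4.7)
\emph{Proof plan.} The plan is to prove, by induction on the layer index $s=0,1,\dots,r-1$, the following strengthened claim: every $v\in V_s$ eventually designates the port toward its parent as \ps\ and sends exactly $r-s$ pulses along that port. Once the claim is established for every layer up to $V_{r-1}$, the result for $l$ follows easily by combining the pulse counts arriving at $l$ with \Cref{lem: even diameter leader cannot upstream}.

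For the base case $s=0$, a leaf $v$ has $d=1$, so the condition of every \ruleu\ rule is satisfied vacuously at initialization: ``$d-1=0$ ports each have at least $i+1$ pulses'' holds trivially, and the unique port has received no pulses. Thus every rule with $i\in[1,r]$ fires (in some order), marking the sole port (which is the parent port) as \ps\ and sending pulses up to a cumulative total of $r$.

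For the inductive step, let $v\in V_s$ with $1\le s\le r-1$ and let $c_1,\dots,c_{d-1}$ be the children with $c_k\in V_{j_k}$, $j_k<s$. By the induction hypothesis, each $c_k$ eventually sends exactly $r-j_k\ge r-s+1$ pulses to $v$, with equality for the child in $V_{s-1}$ that exists by \Cref{obs:one child in layer i-1}. By \Cref{lem: even diameter send to parent before receiving}, the parent port of $v$ receives no pulses until $v$ itself first sends one. I will then argue that, as the minimum pulse count over $v$'s $d-1$ child ports grows asynchronously from $0$ toward $r-s+1$, each \ruleu\ rule with index $i\in[1,r-s]$ fires in turn: when the min first reaches $i+1$, the $d-1$ child ports each hold at least $i+1$ pulses while the parent port is the \emph{uniquely} zero port, so its designation as \ps\ is forced (and is preserved by \Cref{obs: port* constant}); $v$ then sends enough additional pulses to bring the cumulative count to $i$. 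Since the minimum caps at $r-s+1$, no rule with $i>r-s$ ever fires, so the final total is exactly $r-s$ pulses, all directed to the parent.

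Applying the claim to every layer, each neighbor of $l$, sitting in some $V_j$ with $j\le r-1$, eventually delivers $r-j\ge 1$ pulse to $l$, so every port of $l$ eventually holds at least one pulse. By \Cref{lem: even diameter leader cannot upstream}, $l$ never triggers any \ruleu\ rule, so the only rule whose condition is ever met at $l$ is \rulel, which then fires: $l$ outputs $\leader$ and terminates, as required. The main obstacle I anticipate is the asynchrony of pulse arrivals: I must verify that before $v$ itself sends anything, the only port holding zero pulses is the parent port (so \ps\ cannot be misassigned to a child), and that the \ruleu\ rule with the \emph{largest} admissible index $i=r-s$ is actually reached rather than stalling at a smaller $i$. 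Both issues are resolved by noting that rule $i\ge 1$ already requires $d-1$ ports with $\ge 2$ pulses, which forces every child port to be strictly positive at the moment of firing, and that the min-pulse bound $r-s+1$ is attained in the limit by the $V_{s-1}$-child guaranteed by \Cref{obs:one child in layer i-1}.
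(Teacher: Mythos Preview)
Your approach---bottom-up induction over the layers $V_0,\dots,V_{r-1}$---is the same as the paper's, and the skeleton is sound. There is, however, a genuine gap in the inductive step.

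You invoke \Cref{lem: even diameter send to parent before receiving} to justify that the parent port of $v$ holds zero pulses when each \ruleu\ rule fires. But that lemma only guarantees zero pulses on the parent port \emph{until $v$ sends its first pulse upward}. Once the rule with $i=1$ fires (which happens as soon as the minimum child-port count reaches $2$), $v$ has already sent a pulse to its parent, and \Cref{lem: even diameter send to parent before receiving} is silent about whether the parent may now send something back. Yet for each subsequent rule $i=2,\dots,r-s$ to fire you still need the parent port to read zero at those later moments. Your closing paragraph does not cover this: it argues that every \emph{child} port is strictly positive at the moment of firing (so the zero port, if any, is unique), but it does not argue that the \emph{parent} port remains at zero after $v$ has begun sending.

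The paper plugs exactly this hole by citing \Cref{lem: even diameter pulse direction}, which asserts that all pulses travel from child to parent until $l$ triggers \rulel; hence the parent port of every $v\neq l$ stays at zero throughout the entire upward phase, not merely up to the first pulse. With that single substitution your argument goes through. Incidentally, the paper carries only the weaker hypothesis ``at least $r-i$ pulses'' through the induction (which already suffices for the lemma statement) and remarks afterward that combining with \Cref{lem: at most r-i pulses to parent} yields your ``exactly $r-s$'' claim; your stronger induction hypothesis is fine but not needed here.
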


\begin{proof}



   We again apply induction over the layers from $V_0$ to $V_r$. Our goal is to show that every $v \in V_i$ sends \emph{at least} $r - i$ pulses to its parent. In particular, this implies that each child of $l$ sends at least one pulse to $l$, since all of them belong to $V_{r-1}$ or a lower layer. Consequently, $l$ will trigger the \rulel\ rule.

    

    \paragraph{Base case} It is immediate that $u \in V_0$ sends $r$ pulses to its parent, by the definition of \ruleu.

    \paragraph{Induction step} Assume the lemma holds for all nodes up to $V_s$ for some $s \in [0, r-1]$. Let $u$ be a node in $V_{s+1}$. By the induction hypothesis, every child of $u$, all of which lie in $V_s$ or a lower layer, sends at least $r - s$ pulses to $u$. By the definition of the \ruleu\ rules, $u$ will therefore eventually send $r - s - 1$ pulses to its parent. We emphasize that the correctness of this induction step relies on \Cref{lem: even diameter pulse direction}, which ensures that we need only consider pulses traveling from children to parents.
\end{proof}

The above proof, together with \Cref{lem: at most r-i pulses to parent},
implies that every \(v \in V_i\) sends \emph{exactly} \(r - i\) pulses to its parent. The
argument is divided into two lemmas rather than one because the induction in the proof of
\Cref{lem: even diameter tree success} relies crucially on
\Cref{lem: even diameter pulse direction}, which itself depends on
\Cref{lem: at most r-i pulses to parent}.

\begin{lemma} \label{lem: even diameter tree success nonleader}
    Each node other than $l$ eventually terminates as $\nonleader$.
\end{lemma}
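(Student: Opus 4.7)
The plan is to propagate the \ruled\ rule top-down through the tree, starting from $l$. First I note that by \Cref{lem: even diameter tree success}, by the time $l$ triggers the \rulel\ rule, every node $v \neq l$ has already triggered some \ruleu\ rule, and by \Cref{lem: even diameter port* points to parent}, each such $v$ has \ps\ pointing toward its parent. Moreover, once a node has triggered an \ruleu\ rule, the \rulel\ rule is removed from its active rule list, so the only remaining trigger for such a node is \ruled, which fires exactly once upon receipt of a pulse from \ps.

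I would then prove by induction on $i$, decreasing from $r-1$ down to $0$, the following statement: every node $v \in V_i$ eventually receives a pulse from its parent on \ps, thereby triggering the \ruled\ rule, declaring $\nonleader$, and terminating. For the base case $i = r-1$, each $v \in V_{r-1}$ is a child of $l$, and $l$ upon triggering \rulel\ sends a pulse along each of its ports (\Cref{lem: even diameter tree success}), including the one incident to $v$; since \ps\ of $v$ points to $l$, this pulse arrives on \ps\ and \ruled\ fires. For the induction step, suppose every node in $V_{i+1}$ has already triggered \ruled. Pick $v \in V_i$ and let $u$ be its parent; $u$ lies in $V_{j}$ for some $j > i$ (either $u = l$, or $u$ has already triggered \ruled\ by induction hypothesis combined with working downward layer by layer). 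In either case, $u$ has sent a pulse along every port except its own \ps, which by \Cref{lem: even diameter port* points to parent} includes the port leading to $v$. Hence $v$ eventually receives this pulse on its \ps\ and triggers \ruled.

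The main thing to verify carefully is that the pulse delivered from $u$ to $v$ during the downward cascade is indeed received by $v$ on $v$'s \ps\ port; this is immediate because $v$'s \ps\ is incident to $u$ (\Cref{lem: even diameter port* points to parent}) and $u$ sends along that incident port precisely because it is \emph{not} $u$'s own \ps. A small corner case concerns leaves in $V_0$: these have only one port (necessarily \ps), so the instruction ``send a pulse along every port except \ps'' sends nothing, and they simply declare $\nonleader$ and halt. I do not anticipate any genuine obstacle here: once the structural facts about \ps\ and the single triggering of each rule are in hand, the downward propagation is a routine induction. Combined with \Cref{lem: even diameter tree success}, which shows $l$ terminates as $\leader$, this completes the correctness of the algorithm; the message-complexity bound of $O(nr)$ follows by summing the $r-i$ pulses each $v \in V_i$ sends upward with the single pulse each non-leaf node sends downward per child.
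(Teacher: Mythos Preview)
Your proof is correct and follows the same top-down propagation argument as the paper: both establish that every non-$l$ node has \ps\ pointing to its parent before \rulel\ fires, and then cascade the \ruled\ triggers from $l$ down through the layers. Your version is more explicit, stating the layer induction formally, whereas the paper compresses this into a two-sentence ``recursive propagation'' remark; one small point of sloppiness is that your induction step needs the hypothesis for \emph{all} layers above $i$ (not just $V_{i+1}$), since a parent $u$ of $v\in V_i$ may sit in any $V_j$ with $j>i$---you acknowledge this parenthetically, but it would be cleaner to state the strong form up front. The message-complexity remark at the end belongs to the next lemma, not this one.
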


\begin{proof}
By \Cref{lem: even diameter pulse direction}, every non-$l$ node sets \ps\ to point to its parent. Immediately after $l$ triggers \rulel\ (which occurs by \Cref{lem: even diameter tree success}), each child of $l$ receives a pulse through its \ps\ and therefore triggers \ruled. Any node that triggers \ruled\ sends a pulse along every port except \ps, so each of its children likewise receives a pulse through its own \ps\ and triggers \ruled. This recursive propagation ensures that every non-$l$ node eventually triggers \ruled\ and terminates as $\nonleader$.
\end{proof}

\begin{lemma} \label{lem: even diameter complexity}
    The total number of pulses sent by all nodes in this algorithm is $O(nr)$.
\end{lemma}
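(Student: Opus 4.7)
The plan is to account for the pulses generated by each of the three rule categories separately and show that the dominant contribution is $O(nr)$ from the Upstream rules.

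First, I would bound the pulses produced by the \ruleu{} rules. By \Cref{lem: at most r-i pulses to parent}, every node $v \in V_i$ with $i \in [0, r-1]$ sends at most $r - i \le r$ pulses to its parent through \ps, and by \Cref{obs: port* constant} no other \ruleu{}-generated pulses are ever sent. Summing over all nodes gives at most $\sum_{i=0}^{r-1} |V_i| \cdot r \le n r$ pulses in total from \ruleu{}.

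Next, I would handle the \rulel{} and \ruled{} rules. By \Cref{lem: even diameter only l may be leader}, only $l$ ever triggers \rulel{}, and it does so at most once, sending $\deg(l)$ pulses; this contributes at most $n - 1$ pulses. For \ruled{}, \Cref{lem: even diameter tree success nonleader} shows each non-leader triggers \ruled{} at most once (it terminates immediately), and each such trigger sends at most $\deg(v) - 1$ pulses (all ports except \ps{}). Since $G$ is a tree, $\sum_{v} \deg(v) = 2(n-1)$, so the total \ruled{} contribution is $O(n)$.

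Adding the three bounds yields $nr + (n-1) + O(n) = O(nr)$ pulses total. The proof is essentially a one-paragraph tally, and no step is really a hard obstacle: the substantive work has already been done in \Cref{lem: at most r-i pulses to parent} and in the termination lemmas; this lemma just collects the totals. The only mild care needed is to note that each non-leader node triggers \ruled{} exactly once (so we do not double-count), which follows because \ruled{} causes the node to halt.
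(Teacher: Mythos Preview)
Your proposal is correct and follows essentially the same approach as the paper: separate the pulse count into \ruleu{} pulses (bounded by $r$ per non-$l$ node, giving $O(nr)$) and the \rulel{}/\ruled{} pulses (one per edge in the top-down broadcast, giving $n-1$). The paper bundles the \rulel{} and \ruled{} contributions into a single ``exactly one pulse per edge'' observation rather than summing degrees, but the arithmetic and the lemmas invoked are the same.
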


\begin{proof}
Each node other than $l$ sends at most $r$ pulses to its parent via the \ruleu\ rules (see \Cref{lem: even diameter pulse direction}), contributing at most $(n-1)r$ pulses in total. In addition, the nodes send exactly $n-1$ pulses due to the \rulel\ and \ruled\ rules, since exactly one such pulse is sent across each edge and a tree has $n-1$ edges (see the proof of \Cref{lem: even diameter tree success nonleader}). Therefore, the total number of pulses sent is at most $(n-1)r + (n-1) = O(nr)$.
\end{proof}

\mainthmeven*
\begin{proof}[Proof of \Cref{thm:even_diameter_leader_election}]
    The correctness and termination of the leader election follow from \Cref{lem: even diameter tree success,lem: even diameter tree success nonleader}, while the message complexity follows from \Cref{lem: even diameter complexity}.
\end{proof}

\paragraph{Failing the quiescence guarantee}
The algorithm above does not guarantee quiescent termination. Recall from the proofs of \Cref{lem: at most r-i pulses to parent,lem: even diameter tree success} that any node \(u \in V_{s+1}\), for \(s \in [0, r-1]\), will eventually send exactly \(r - s - 1\) pulses to its parent. For this to occur, each child of \(u\) must have sent \(u\) \(r - s\) pulses. However, some children of \(u\) may lie in layers strictly below \(V_s\), and thus will ultimately send \(u\) more than \(r - s\) pulses. Crucially, \(u\) cannot determine whether such additional pulses will arrive in the future. Consequently, some pulses from the children  of  \(u\) may reach \(u\) after  \(u\)  has already terminated, and therefore quiescence cannot be guaranteed.


\subsection{Quiescently Terminating Anonymous Leader Election on Asymmetric Trees}

In this section, we show a quiescently terminating leader election in anonymous trees, unless they are symmetric about an edge. This algorithm, however, requires each node to know the full topology of $G$. In the remainder of the section, we will make the distinction between nodes, which are real computing devices performing content-oblivious leader election, and vertices of the topology knowledge $G$.

In contrast with the algorithm for even-diameter trees, the main challenge here is to break symmetry within layer \(V_r\), which contains two vertices when the tree diameter is odd. To resolve this symmetry, we use the assumption that \(G\) is not symmetric about an edge, implying that the two nodes in \(V_r\) have distinct views of the lower-layer nodes. Rather than using the number of pulses to encode only the \emph{height} of a subtree, our algorithm uses the number of pulses to encode the full \emph{shape} of each subtree. A pleasant consequence of this more refined encoding is that the algorithm now achieves quiescent termination.

\mainthm*

The remainder of this section is organized as follows. In \Cref{sect:preprocessing}, we detail the layer decomposition, enumeration of distinct subtrees, and construction of rules. In \Cref{sect:correctnessproof}, we prove that the above-constructed rules indeed result in a quiescently terminating leader election.

\subsubsection{Algorithm Description} \label{sect:preprocessing}

Each node first identifies the center(s) of the advised topology \(G\). To do so, every node locally performs the layer decomposition (\Cref{alg: layer_decomp}) using the advice \(G\), thereby partitioning the vertex set \(V\) into disjoint layers \(V_0, V_1, \ldots, V_r\). We emphasize that, unlike in the even-diameter algorithm, where the layer decomposition is used only in the analysis, here \emph{each node actually performs the layer decomposition locally}. 

\begin{definition} [Subtrees]
    For each vertex $v \in V_i$, define $G_{\leq v}$ as the induced subgraph of $G$ over the vertex set $\{v\} \cup (\bigcup_{j < i} V_j)$. Let $T^v$ be the connected component of $G_{\leq v}$ that contains $v$. We call $T^v$ \ul{the subtree rooted at $v$}.
\end{definition}

We first enumerate all distinct subtrees in an ``increasing'' order, and denote them by 
\(T_1, T_2, \ldots\) in the order in which they are listed. We stress the distinction between
subscripts and superscripts: we use a superscript \(v\) to denote the subtree rooted at \(v\),
i.e., \(T^v\), and a subscript \(i\) to index the enumeration, i.e., \(T_i\).

Each (rooted) subtree isomorphic class is enumerated only once. Note that if \(v \in V_i\),
then \(T^v\) has height \(i\); hence, two subtrees can be isomorphic only if their roots lie 
in the same layer. Conceptually, to carry out the enumeration, it suffices to define a
comparator on subtrees. The rules of this comparator are specified as follows.

\paragraph{Subtree comparator} The subtree comparator takes two non-isomorphic subtrees \(T^v\) and \(T^u\) as input and
determines whether \(T^v \prec T^u\) or \(T^u \prec T^v\). The enumeration of subtrees is
required to be consistent with this comparator: if \(T^v \prec T^u\), then \(T^v\) must be
listed before \(T^u\). We will later show that the comparator is transitive, ensuring that a
valid enumeration indeed exists. For isomorphic subtrees, we write \(T^v = T^u\), and we use
\(T^v \preceq T^u\) to denote either \(T^v = T^u\) or \(T^v \prec T^u\).

Given inputs $T^v$ and $T^u$, the comparator follows the three rules below.

\begin{description}
    \item[Low layer first:] Assume $v$ and $u$ are not of the same layer. If $v$ is from a lower layer, then the comparator outputs $T^v \prec T^u$, and vice versa.
    \item[Fewer children first:] Assume $v$ and $u$ are of the same layer. Let $N_{T^v}(v)$ (resp.,~$N_{T^u}(u)$) be the set of neighbors of $v$ (resp.,~$u$) in $T^v$ (resp.,~$T^u$). If $|N_{T^v}(v)| < |N_{T^u}(u)|$, then the comparator outputs $T^v \prec T^u$, and vice versa.
    \item[Recursion rule:] Assume \(v\) and \(u\) lie in the same layer and have the same number of children in their
respective rooted subtrees, i.e., \(|N_{T^v}(v)| = |N_{T^u}(u)|\). Let the children of \(v\) be 
\(v_1, v_2, \ldots, v_{d-1}\), ordered so that 
\[
T^{v_1} \preceq T^{v_2} \preceq \cdots \preceq T^{v_{d-1}}.
\]
Similarly, list the children of \(u\) as \(u_1, u_2, \ldots, u_{d-1}\), ordered according to
$\preceq$.

Now consider the smallest index \(i\) for which \(T^{v_i}\) and \(T^{u_i}\) are not
isomorphic. If \(T^{v_i} \prec T^{u_i}\), then comparator outputs \(T^v \prec T^u\), and vice versa.
\end{description}

\begin{lemma}
    The subtree comparator is well-defined in the following sense: (1) the comparator does not recurse infinitely and (2) the comparator is transitive.
\end{lemma}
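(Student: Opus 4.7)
My approach is to handle the two parts by (essentially) simultaneous induction on the layer index of the roots. For part (1) I will argue directly from the structure of the three rules, and for part (2) I will show that the comparator is equivalent to lexicographic comparison of a canonically defined ``signature'' associated with each isomorphism class of subtree; transitivity then reduces to transitivity of the lex order.

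For part (1), I observe that the ``low layer first'' and ``fewer children first'' rules resolve the comparison immediately without recursion. Only the recursion rule invokes $\prec$ on smaller instances, and each such call compares subtrees rooted at children $v_i, u_i$ of the current roots $v, u$. By the layer decomposition, $\mathrm{layer}(v_i) < \mathrm{layer}(v)$ and similarly for $u_i$, so the common layer index strictly decreases with each recursive call. Since the initial common layer is at most $r$, the recursion terminates in at most $r$ steps. I also need to check that the recursion rule always finds a witness index $i$: if $T^v$ and $T^u$ lie in the same layer and have the same number of children but are non-isomorphic, then the multisets of isomorphism classes of their children subtrees must differ, so once the children are sorted with respect to $\preceq$, some index $i$ must witness a non-isomorphic pair.

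For part (2), I define by induction on the layer a canonical signature
\[
\sigma(T^v) = \bigl(\mathrm{layer}(v),\; |N_{T^v}(v)|,\; (\sigma(T^{v_1}), \ldots, \sigma(T^{v_{d-1}}))\bigr),
\]
where the children are listed in non-decreasing order of $\sigma$. The induction hypothesis, applied to children in strictly lower layers, guarantees that $\sigma$ is well-defined on those children and that the sorted listing is unique up to permutations of equal-signature children (which do not affect the tuple). A straightforward case analysis on the three comparator rules then establishes that $T^v \prec T^u$ if and only if $\sigma(T^v) < \sigma(T^u)$ under lex order on such nested tuples, where integers are compared numerically and tuples recursively. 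Since lex order on tuples is a strict total order, it is transitive, which transfers transitivity to $\prec$. A small side point is that isomorphic subtrees receive identical signatures, because the sort imposes a canonical order invariant under relabeling of the root's children.

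The main obstacle I anticipate is bookkeeping in the recursion rule: to apply it at layer $i$, one must already have $\preceq$ available on subtrees at lower layers in order to sort the children, which is why the two parts of the lemma must be established together by induction on layer. The signature viewpoint is what makes this clean, because once the signatures of all subtrees in layers $<i$ are fixed, the sorting of children of vertices in $V_i$ is unambiguous, and both termination of the recursion and transitivity of the resulting order follow without further work from properties of the lex order.
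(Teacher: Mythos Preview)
Your proof is correct. For part~(1) your argument coincides with the paper's: recursion fires only when the two roots share a layer, and each recursive call passes to children in strictly lower layers, so the depth is bounded by~$r$. For part~(2), however, you take a genuinely different route. The paper argues transitivity directly: given $T^u \prec T^v \prec T^w$ all decided by the recursion rule, it locates the witness indices $i$ (for $u,v$) and $j$ (for $v,w$), sets $k=\min(i,j)$, and verifies by a short case analysis that $T^{u_k}\prec T^{w_k}$, implicitly invoking transitivity at lower layers. Your signature construction instead realizes the comparator as lexicographic comparison of canonical encodings $\sigma(\cdot)$, so transitivity is inherited for free from the lex order. The benefit of your approach is that it makes the layer-wise induction explicit and simultaneously dissolves the circularity you flagged---that sorting the children in the recursion rule already presupposes $\preceq$ on lower layers---whereas the paper's direct case analysis is shorter but leaves that induction implicit.
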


\begin{proof}
 To show that the recursion terminates, observe that recursion is required only when comparing $T^v$ and $T^u$ where $v$ and $u$ lie in the same layer, say layer $i$. By definition of a subtree, every vertex in $T^v \setminus \{v\}$ belongs to a layer lower than that of $v$ (and the same holds for $T^u$ and $u$). Hence, all recursive comparisons involving 
\[
T^{v_1}, T^{v_2}, \ldots, T^{v_{d-1}}, \qquad T^{u_1}, T^{u_2}, \ldots, T^{u_{d-1}}
\]
involve subtrees rooted at vertices in layers at most $i-1$. This ensures that the comparator performs only finitely many recursive calls.

 We now prove transitivity of the comparator. Given distinct subtrees $T^u$, $T^v$, and $T^w$, assume that the comparator outputs $T^u \prec T^v$ and $T^v \prec T^w$. We claim that it must also output $T^u \prec T^w$.

If $u$, $v$, and $w$ do not lie in the same layer, the conclusion is immediate, since $u$ must be in a layer lower than $w$. A similar argument applies if $u$, $v$, and $w$ do not have the same number of children. Thus, the only remaining case is when both comparisons $T^u \prec T^v$ and $T^v \prec T^w$ were decided by the \textbf{Recursion rule}. 

In this case, let $u_1, u_2, \ldots, u_{d-1}$ be the children of $u$ ordered according to~$\preceq$, and similarly for $v$ and $w$. Let $i$ be the smallest index for which $T^{u_i}$ is not isomorphic to $T^{v_i}$, and let $j$ be the smallest such index for the comparison between $T^v$ and $T^w$. Then $k = \min(i, j)$ is the smallest index at which $T^{u_k}$ is not isomorphic to $T^{w_k}$. Moreover, at least one of $T^{u_k} \prec T^{v_k}$ or $T^{v_k} \prec T^{w_k}$ holds, and in either case we obtain $T^{u_k} \prec T^{w_k}$. Applying the \textbf{Recursion rule} once more then yields $T^u \prec T^w$. This establishes transitivity.
\end{proof}

\paragraph{Subtree enumeration} Now all nodes obtain identical enumerations.

\begin{equation*}
    T_1, T_2,\ldots T_k,
\end{equation*}
where $k$ is the number of distinct (non-isomorphic) subtrees appearing in $G$.

For convenience of presentation, we define a helper function that maps each vertex $v$ to the index of the subtree rooted at $v$ in the enumerated list.

\begin{definition}
    For a vertex \(v\), define
    \[
        \tau(v) = i \quad \text{such that } T^v \text{ is isomorphic to } T_i.
    \]
\end{definition}
We also define a helper function that maps the index $i$ to the integer $k-i$.

\begin{definition}
    Define
    \begin{equation*}
        \lambda(i) = k-i.
    \end{equation*}
\end{definition}

Write $\lambda\tau(v)$ as shorthand for $\lambda(\tau(v))$. By definition, we have $\lambda\tau(v) < \lambda\tau(u)$ if and only if $T^v \prec T^u$, and $\lambda\tau(v) = \lambda\tau(u)$ if and only if $T^v$ is isomorphic to $T^u$. 

We are now ready to present the rules for our algorithm. Each node derives three categories of rules from the subtree enumeration: \ruleu, \rulel, and \ruled. Initially, only \ruleu~and \rulel~rules are active. After a node triggers at least one of \ruleu~and \rulel~rules, \ruled~rules become active for that node.

\begin{mdframed}
\begin{itemize}
    \item \ruleu: One rule is constructed for each enumerated subtree, \emph{except for the last one}. For each $T_i$ with $i \le k-1$, let the neighbors of the root be $m_1, m_2, \ldots, m_{d-1}$. Create the rule 
\[
    [\lambda\tau(m_1), \lambda\tau(m_2), \ldots, \lambda\tau(m_{d-1})] \;\rightarrow\; \lambda(i).
\]
This rule is interpreted as follows: whenever a node with $d$ ports observes that $d-1$ of its ports have respectively received 
\[
    \lambda\tau(m_1), \lambda\tau(m_2), \ldots, \lambda\tau(m_{d-1})
\]
pulses, while the remaining port has received \emph{none}, it marks that remaining port as $\mathsf{Port*}$ and sends pulses along $\mathsf{Port*}$ until the total number of pulses sent through it is exactly $\lambda(i)$. The node then adds \ruled\ to its list of active rules and disables the \rulel\ rule.

Note that in the special case of $T_1$, the subtree consists of a single leaf node. The rule generated by $T_1$ is therefore equivalent to the following: if a node has only one port, it sends $\lambda(1) = k-1$ pulses along that port \emph{immediately upon initialization}.
    
    \item \rulel~(Odd diameter): (This rule is added if $G$ has odd diameter.) For $T_k$, consider the neighbors of the root $m_1, m_2, \ldots, m_{d-1}$.   
The rule is as follows: whenever a node with $d$ ports sees that $d-1$ of its ports have respectively received
\[
    \lambda\tau(m_1),\, \lambda\tau(m_2),\, \ldots,\, \lambda\tau(m_{d-1})
\]
pulses, \emph{while the remaining port has received one pulse}, the node sends a pulse along all ports, declares $\leader$, and terminates. 

    \item \rulel~(Even diameter): (This rule is added if $G$ has even diameter.) For $T_k$, consider the neighbors of the root $m_1, m_2, \ldots, m_d$. The rule is as follows: whenever a node with $d$ ports sees \emph{all of its ports} receive, respectively, 
\[
    \lambda\tau(m_1),\ \lambda\tau(m_2),\ \ldots,\ \lambda\tau(m_d)
\]
pulses, the node sends a pulse along all ports, declares $\leader$, and terminates.
    \item \ruled: When a pulse is received from $\mathsf{Port*}$, the node sends a pulse along every port except $\mathsf{Port*}$, declares $\nonleader$, and terminates.
\end{itemize}
\end{mdframed}

\paragraph{Manual parent assignment} Recall that we defined parent and children in \Cref{def:parent and children}. If the advised topology $G$ has even diameter, then $V_r = \{l\}$ is a singleton set; apart from $l$, each vertex has its parent assigned, and we aim to elect the node corresponding to $l$ as the leader. If $G$ has odd diameter, $V_r$ contains two vertices $l_1$ and $l_2$, which both have no parents. The fact that $G$ is not symmetric over any edge enables us to decide a leader between $l_1$ and $l_2$: in fact, $T^{l_1}$ must not be isomorphic to $T^{l_2}$, and they are exactly the last two trees enumerated due to belonging to the highest layer $r$. Without loss of generality, we assume $T^{l_1} = T_{k-1}$, $T^{l_2} = T_k$ in the remainder of this section. To make proofs more concise, we manually assign $l_2$ as the parent of $l_1$, and $l_1$ as a child of $l_2$. At no risk of confusion, if the diameter is odd, define $l = l_2$. In either case, the only vertex \emph{without a parent} is $l$, and we call it the \emph{root} of the tree. Our goal is to elect the node corresponding to the root vertex $l$ as leader.

\subsubsection{Proof of Correctness} \label{sect:correctnessproof}

We now prove the correctness of the algorithm.

\begin{lemma} [\ruleu~rules are well-defined]
\label{lem: partial_transmission}
Consider any two distinct \textbf{Upstream} rules, generated by $T_j$ and $T_{j'}$, whose roots each have the same number of neighbors: 
\begin{equation*}
    [\lambda\tau(m_1), \lambda\tau(m_2), \ldots, \lambda\tau(m_{d-1})] \rightarrow \lambda(j) \quad \text{and} \quad
    [\lambda\tau(m'_1), \lambda\tau(m'_2), \ldots, \lambda\tau(m'_{d-1})] \rightarrow \lambda(j').
\end{equation*}
Without loss of generality, assume each list of $\lambda\tau$-values is sorted in non-increasing order. If 
\[
\lambda\tau(m_1) \ge \lambda\tau(m'_1),\;
\lambda\tau(m_2) \ge \lambda\tau(m'_2),\;
\ldots,\;
\lambda\tau(m_{d-1}) \ge \lambda\tau(m'_{d-1}),
\]
then $\lambda(j) > \lambda(j')$.
\end{lemma}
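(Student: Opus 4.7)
The plan is to translate the hypothesis into the language of the subtree comparator and then run a short case analysis on the layers of the roots of $T_j$ and $T_{j'}$. Since $\lambda\tau(\cdot)=k-\tau(\cdot)$, sorting a list of $\lambda\tau$-values in non-increasing order is equivalent to ordering the children by non-decreasing enumeration index, i.e., by $\preceq$; this matches the convention used in the Recursion rule of the comparator. Under this translation, the componentwise hypothesis $\lambda\tau(m_i)\ge\lambda\tau(m'_i)$ becomes $T^{m_i}\preceq T^{m'_i}$ for every $i\in[1,d-1]$, while the desired conclusion $\lambda(j)>\lambda(j')$ is equivalent to $T_j\prec T_{j'}$.

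Let $i_1,i_2$ denote the layers of the roots of $T_j,T_{j'}$ respectively. Since the two rules are distinct we must have $d-1\ge 1$ (the only subtree whose root has no children is the single-vertex subtree $T_1$, so distinct rules with $d-1=0$ cannot coexist); consequently $i_1,i_2\ge 1$. If $i_1<i_2$, the Low layer first rule immediately gives $T_j\prec T_{j'}$. If $i_1=i_2$, then since $T_j\ne T_{j'}$ yet the two roots share the same layer and the same number of children, their $\preceq$-sorted child lists must differ at some position. Let $i^\ast$ be the smallest such position. The hypothesis then gives $T^{m_{i^\ast}}\preceq T^{m'_{i^\ast}}$, and non-isomorphism upgrades this to $T^{m_{i^\ast}}\prec T^{m'_{i^\ast}}$, after which the Recursion rule yields $T_j\prec T_{j'}$.

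The step I expect to be the main obstacle is ruling out the remaining case $i_1>i_2$, which must be shown incompatible with the hypothesis. The idea is to apply \Cref{obs:one child in layer i-1} to the root of $T_j$ to produce a child $m_\ell$ lying in $V_{i_1-1}$; since $i_1-1\ge i_2$, this child sits in a layer no lower than $i_2$. On the other hand, every child of $T_{j'}$'s root lies in some layer strictly below $i_2$, so by the Low layer first rule $T^{m'_a}\prec T^{m_\ell}$ for all $a$, which gives $\lambda\tau(m'_a)>\lambda\tau(m_\ell)$ for every $a$. Thus the minimum of the $\lambda\tau(m_i)$ is at most $\lambda\tau(m_\ell)$, strictly less than the minimum of the $\lambda\tau(m'_a)$, yielding $\lambda\tau(m_{d-1})<\lambda\tau(m'_{d-1})$ and contradicting the componentwise inequality. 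This closes the case analysis and completes the proof.
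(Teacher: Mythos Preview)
Your proposal is correct and follows essentially the same approach as the paper: translate the $\lambda\tau$-inequalities into the $\preceq$ order, then do a three-way case split on the layers of the two roots, using \textbf{Low layer first} when $i_1<i_2$, the \textbf{Recursion rule} when $i_1=i_2$, and \Cref{obs:one child in layer i-1} to rule out $i_1>i_2$. The only cosmetic difference is that in the last case the paper argues directly that the last child $m_{d-1}$ (having the largest $\tau$-value) must itself sit in $V_{i_1-1}$, whereas you pick an arbitrary witness $m_\ell\in V_{i_1-1}$ and compare minima; both reach the same contradiction at index $d-1$.
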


\begin{proof}
The definition of $\lambda$ implies that
\[
\tau(m_1) \le \tau(m'_1),\ 
\tau(m_2) \le \tau(m'_2),\ \ldots,\ 
\tau(m_{d-1}) \le \tau(m'_{d-1}).
\]
To prove $\lambda(j) > \lambda(j')$, it suffices to show that $j < j'$, i.e., that $T_j$ is enumerated before $T_{j'}$. Since the respective roots of $T_j$ and $T_{j'}$ have the same degree, the \textbf{Fewer children first} rule does not apply. Thus, the comparison must be determined by either the \textbf{Low layer first} rule or the \textbf{Recursion rule}.

If the root of $T_j$ lies in a lower layer than the root of $T_{j'}$, then $T_j \prec T_{j'}$ immediately by the \textbf{Low layer first} rule.
    
Now consider the case where the two roots lie in the same layer. Since the \ruleu\ rules under consideration are distinct, there exists a smallest index $i$ such that 
\[
\tau(m_i) < \tau(m'_i),
\]
meaning that $T^{m_i}$ is enumerated before $T^{m'_i}$, and hence $T^{m_i} \prec T^{m'_i}$. By the \textbf{Recursion rule}, we then obtain $T_j \prec T_{j'}$, so $T_j$ is enumerated first.

It remains to show that it is impossible for the root of $T_j$ to lie in a \emph{higher} layer than the root of $T_{j'}$. Since each list of $\lambda\tau$-values is sorted in non-increasing order, we have
\[
\tau(m_1) \le \tau(m_2) \le \cdots \le \tau(m_{d-1}).
\]
Thus $T^{m_{d-1}}$ is the last among these subtrees to be enumerated, so $m_{d-1}$ is a highest-layer child of the root of $T_j$. By \Cref{obs:one child in layer i-1}, $m_{d-1}$ must lie exactly one layer below the root of $T_j$, and therefore must lie in a higher layer than $m'_{d-1}$. Hence 
\[
\tau(m_{d-1}) > \tau(m'_{d-1}),
\]
contradicting the fact that $\tau(m_{d-1}) \le \tau(m'_{d-1})$.   
\end{proof}

The following observation still applies.

\portconstant*

\Cref{lem: partial_transmission} and \Cref{obs: port* constant} show that our \textbf{Upstream} rules are \emph{well-defined} in the following sense: While a node can trigger multiple \ruleu~rules throughout the election process, an \ruleu~rule triggered earlier does not send more pulses along \ps~than an \ruleu~rule triggered later.

\begin{lemma} \label{lem: at most lambda tau(u) pulses to parent}
    Let $u$ be a node with $u \neq l$. Then $u$ sends at most $\lambda\tau(u)$ pulses to its parent.
\end{lemma}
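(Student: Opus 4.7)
The approach is induction on the layer index of $u$, paralleling the even-diameter bound \Cref{lem: at most r-i pulses to parent}. Compared to that proof, the conceptual refinement is that the pulse count on a port now encodes the full isomorphism class of a subtree via $\lambda\tau(\cdot)$, and the induction step will rely on \Cref{lem: partial_transmission} to compare the firing rule against the ``canonical'' rule for $T^u$ itself.

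For the base case, if $u \in V_0$, then $T^u$ is a single vertex and is isomorphic to $T_1$, so $\tau(u) = 1$ and $\lambda\tau(u) = k-1$. The \ruleu\ rule generated by $T_1$ fires at any one-port node immediately upon initialization and sends exactly $k-1$ pulses through its unique port, which is necessarily the parent edge. The bound is thus tight at the base.

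For the induction step, assume the claim for all nodes in $V_0, \ldots, V_s$, and let $u \in V_{s+1}$ with $u \neq l$ and children $c_1, \ldots, c_{d-1}$, all in lower layers. By \Cref{obs: port* constant}, $\mathsf{Port*}$ is fixed after the first \ruleu\ firing, so if it ever points to a child then the parent direction receives zero \ruleu\ pulses and the bound is trivial; thus we may assume $\mathsf{Port*}$ points to the parent throughout. For any \ruleu\ rule for $T_j$ that triggers at $u$, let its root-children be $m_1^j, \ldots, m_{d-1}^j$. At the firing moment the $d-1$ child-ports of $u$ show pulse counts whose multiset equals $\{\lambda\tau(m_1^j), \ldots, \lambda\tau(m_{d-1}^j)\}$ and, by the induction hypothesis applied child-by-child, is pointwise bounded under the natural bijection by $\{\lambda\tau(c_1), \ldots, \lambda\tau(c_{d-1})\}$. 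A direct counting argument (the $i$-th largest of the observed multiset is bounded by the $i$-th largest of the child $\lambda\tau$-values) upgrades this to pointwise domination once both multisets are sorted non-increasingly. Since $u \neq l$ and $T_k$ is reserved for $T^l$ (by the enumeration for even diameter, and by the convention $T^{l_2} = T_k$ for odd diameter), we have $\tau(u) \le k-1$, so the rule for $T^u$ is itself well-defined. Applying \Cref{lem: partial_transmission} with $T^u$ in the dominant role yields $\lambda(j) \le \lambda\tau(u)$, with equality exactly when $T_j = T^u$.

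Because pulses through $\mathsf{Port*}$ are only ever topped up to the value $\lambda(\cdot)$ of a newly firing rule and are never retracted, the total pulses $u$ sends to its parent equals the maximum $\lambda(j)$ over all \ruleu\ rules triggered at $u$, which by the above is at most $\lambda\tau(u)$. The main obstacle, and the only step requiring genuine care, is the rearrangement argument that converts the child-indexed inequality $o_i \le \lambda\tau(c_i)$ into the sorted-multiset domination demanded by \Cref{lem: partial_transmission}; this is a standard majorization-style fact but must be stated precisely so that the bijection between observed counts and children lines up with the $\lambda\tau$-ordering. Everything else---including the verification $\tau(u) \le k - 1$ and the monotonicity of the pulse count through $\mathsf{Port*}$---follows directly from the setup of the algorithm.
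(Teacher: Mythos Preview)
Your proposal is correct and follows essentially the same approach as the paper: bottom-up induction on layers, with the base case $V_0$ handled by the rule for $T_1$, and the induction step combining the hypothesis on the children with \Cref{lem: partial_transmission} to bound the output of any \ruleu\ rule that fires at $u$. The paper phrases the induction step as a proof by contradiction (assume some rule sends $\lambda\tau(u') > \lambda\tau(u)$ pulses), whereas you argue directly that every triggered rule $T_j$ satisfies $\lambda(j) \le \lambda\tau(u)$; these are logically equivalent. Your write-up is in fact slightly more careful than the paper's in one respect: you make explicit the rearrangement step (that $o_i \le \lambda\tau(c_i)$ child-by-child implies domination after sorting both multisets non-increasingly), which the paper compresses into a single ``Consequently''.
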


\begin{proof}
Recall from the definition of \ruleu~that if the rule constructed from the $\tau(u)$th enumerated tree $T_{\tau(u)} = T^u$ is triggered, $\lambda\tau(u)$ pulses are sent to \ps. The only possible way for $u$ to send additional pulses would be to trigger another \ruleu\ rule and send $\lambda\tau(u') > \lambda\tau(u)$ pulses to $v$. We will show that this cannot occur. We proceed by induction over layers, from $V_0$ to $V_r$, i.e., in a bottom-up manner.

\paragraph{Base case}
Let $u$ be a leaf node in $V_0$. Then $T^u$ is a single-node tree and must be the first to be enumerated, i.e., $\tau(u) = 1$. By the \textbf{Upstream} rule constructed from it, $u$ sends $k-1 = \lambda\tau(u)$ pulses to its parent immediately upon initialization and never sends any additional pulses. Hence, the lemma holds for $u$.

\paragraph{Induction step} Assume the lemma holds for all nodes in $V_s$ and lower layers for some $s \in [0, r-1]$. Let $u$ be a node in $V_{s+1}$ with $u \neq l$, and let $v$ be its parent. Note that $\tau(u) \leq k-1$, so $\lambda\tau(u) \geq 1$.
    
    Suppose $u$ sets $\mathsf{Port*}$ to point to $v$ (otherwise the lemma follows immediately for $u$). Let the children of $u$ be $\{m_1, m_2, \ldots, m_{d-1}\}$ with
\[
\lambda\tau(m_1) \ge \lambda\tau(m_2) \ge \cdots \ge \lambda\tau(m_{d-1}).
\]
Assume, for the sake of contradiction, that $u$ eventually triggers some \textbf{Upstream} rule sending $\lambda\tau(u') > \lambda\tau(u)$ pulses to $v$. Let the children of $u'$ be $\{m'_1, m'_2, \ldots, m'_{d-1}\}$, ordered so that
\[
\lambda\tau(m'_1) \ge \lambda\tau(m'_2) \ge \cdots \ge \lambda\tau(m'_{d-1}).
\]
Since each child of $u$ lies in a layer below $V_{s+1}$, the induction hypothesis implies that each $m_i$ sends at most $\lambda\tau(m_i)$ pulses to $u$. Consequently,
\[
\lambda\tau(m_1) \ge \lambda\tau(m'_1),\;
\lambda\tau(m_2) \ge \lambda\tau(m'_2),\;
\ldots,\;
\lambda\tau(m_{d-1}) \ge \lambda\tau(m'_{d-1}).
\]
By \Cref{lem: partial_transmission}, we have $\lambda\tau(u') < \lambda\tau(u)$, contradicting the assumption that $\lambda\tau(u') > \lambda\tau(u)$.
\end{proof}

\begin{lemma} \label{lem: general tree send to parent before receiving}
    Let $u$ be a node with $u \neq l$. Then $u$ does not receive any pulses from its parent $v$ before sending one to $v$. 
\end{lemma}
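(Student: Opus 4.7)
My plan is to adapt the top-down layer induction of \Cref{lem: even diameter send to parent before receiving} to the exact-matching \ruleu\ rules of the general algorithm, inducting on the layer of $u$'s algorithmic parent $v$, starting with the base case $v = l$ (in $V_r$) and descending. Recall that for odd-diameter trees the algorithm declares $l = l_2$ to be the parent of $l_1$, so the induction also covers $u = l_1$.

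For the inductive step ($v \in V_s$, $1 \le s \le r-1$, so $v \neq l$), I follow the structure of the even-diameter argument. Suppose for contradiction that $u$ receives from $v$ before sending; the offending pulse is produced by a \rulel, \ruleu, or \ruled\ rule at $v$. The \rulel\ case fails because every applicable \rulel\ pattern assigns a positive required count to each port of $v$, including the port to $u$. The \ruled\ case reduces to an earlier \ruleu\ trigger. The crux is therefore an \ruleu\ rule at $v$ with \ps\ equal to the port to $u$: the remaining $d_v-1$ non-\ps\ ports each carry a value $\lambda\tau(m'_i) \ge 2$, the lower bound following from the comparator's \textbf{Low layer first} rule, since $\tau(m'_i) < \tau(v^*) \le k-1$ for the triggered $T_j = T^{v^*}$. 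In particular, the port at $v$ toward $v$'s own parent $w$ must have received at least $2$ pulses, so $w$ has sent to $v$. The induction hypothesis applied to $v$ (whose parent lies strictly higher) then forces $v$ to have sent to $w$ first, which could only have occurred via an earlier \ruleu\ trigger with \ps\ set to the port to $w$; by \Cref{obs: port* constant}, \ps\ cannot later move to the port to $u$, contradicting the hypothesis.

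For the base case $v = l$, the \rulel\ and \ruled\ disposals carry over, so the essential new task is to rule out an \ruleu\ trigger at $l$ with \ps\ equal to the port to $u$. Combining \Cref{lem: at most lambda tau(u) pulses to parent} (which caps each non-\ps\ port at $l$ by $\lambda\tau$ of the corresponding algorithmic child) with the lower bound $\lambda\tau(m'_i) \ge 2$, I pin down \ps: in the odd-diameter case, $\lambda\tau(l_1) = 1$ forces $u = l_1$ and \ps\ to be the port to $l_1$, and a similar forcing operates in the even-diameter case whenever a child of $l$ has $\lambda\tau = 1$. With \ps\ forced, it remains to show that the non-\ps\ ports of $l$ admit no exact $\lambda\tau$-multiset match to any $T_j$ with $j \le k-1$.

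The main obstacle is this final sub-case; unlike the even-diameter analogue (whose thresholds forbid any match by a simple counting argument), here the \ruleu\ rules require an exact multiset match, and intermediate port counts produced by asynchronous pulse interleavings must be considered. My plan is to translate any such match into an enumeration-level contradiction. If the match attains the componentwise maximum over the non-\ps\ ports, then by the comparator's \textbf{Recursion rule} $T_j$ and $T^l = T_k$ share the same sorted children $\lambda\tau$-multiset and are therefore isomorphic as rooted trees; the asymmetry of $G$ ensures that $T^l$ is the unique representative of its isomorphism class in the enumeration, forcing $j = k$ and contradicting $j \le k-1$. For a match strictly below the maximum at some coordinate, I expect to adapt the comparator-based argument inside the proof of \Cref{lem: partial_transmission}---whose mechanism does not actually require the compared subtrees to be \ruleu-producing---to obtain $\lambda(j) > \lambda(k) = 0$, again contradicting $j \le k-1$. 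The delicate point is to rule out \emph{accidental} exact matches at intermediate interleaved states, which is precisely where \Cref{lem: at most lambda tau(u) pulses to parent}, the enumeration structure, and the asymmetry of $G$ jointly intervene.
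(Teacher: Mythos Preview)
Your top-down induction and the inductive step are essentially the paper's; the problems lie in the base case $v = l$, where you try to rule out an \ruleu\ trigger at $l$.

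First, the inequality you aim for, ``$\lambda(j) > \lambda(k) = 0$,'' does \emph{not} contradict $j \le k-1$; it is equivalent to it. If the non-\ps\ caps (the children of $T_k$) dominate the rule's required values (the children of $T_j$), the comparator argument from \Cref{lem: partial_transmission} yields $\lambda(k) > \lambda(j)$, i.e., $0 > \lambda(j) \ge 1$, which \emph{is} the contradiction. More seriously, your plan to compare $T_j$ against $T_k$ collapses in the even-diameter case. There $T^l = T_k$ is rooted at the unique center, and its root has $d_l$ children (all of $l$'s neighbors), whereas any \ruleu\ rule triggerable at a degree-$d_l$ node comes from a $T_j$ whose root has only $d_l - 1$ children. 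The two children-multisets have different sizes, so ``same sorted children $\lambda\tau$-multiset $\Rightarrow T_j \cong T_k$'' is vacuous, and \Cref{lem: partial_transmission} (which presupposes equal root degree) cannot be invoked between $T_j$ and $T_k$. Your route only closes in the odd-diameter case, where removing the forced port to $l_1$ leaves exactly the $d_l - 1$ children of $T^{l_2} = T_k$.

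The paper avoids both issues by proving the stronger statement that $l$ cannot trigger \emph{any} \ruleu\ rule, without first pinning down \ps. The key observation is that $l$ always has at least two children $m_{d-1}, m_d$ in layers $\ge r-1$ (by \Cref{lem: l two children} for even diameter, and $l_1 \in V_r$ together with \Cref{obs:one child in layer i-1} for odd), whereas any $T_j$ with root in a layer $\le r-1$ has all children in layers $\le r-2$; thus the rule's smallest required count $\lambda\tau(m'_{d-1})$ strictly exceeds both caps $\lambda\tau(m_{d-1})$ and $\lambda\tau(m_d)$, so two ports are too small to serve as non-\ps\ ports but only one can be \ps. The single remaining case (odd diameter, $T_j = T_{k-1} = T^{l_1}$, root in $V_r$) is then dispatched by a direct appeal to the \textbf{Recursion rule}: since $T^{l_1} \prec T^l$, at the first differing child index $i$ one has $\lambda\tau(m_i) < \lambda\tau(m'_i)$, and the sorted-cap bound at that index blocks the match.
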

\begin{proof}
We first show that $l$ cannot trigger any \ruleu\ rule. Recall that $T^l = T_k$ is the last enumerated subtree; let the children of $l$ be $\{m_1, m_2, \ldots, m_d\}$, ordered so that $\lambda\tau(m_1) \ge \lambda\tau(m_2) \ge \cdots \ge \lambda\tau(m_d)$. By \Cref{lem: at most lambda tau(u) pulses to parent}, node $l$ receives at most $\lambda\tau(m_1), \lambda\tau(m_2), \ldots, \lambda\tau(m_d)$ pulses from its children. Consider any \ruleu\ rule associated with some node $u'$ of degree $d$, whose children are $\{m'_1, m'_2, \ldots, m'_{d-1}\}$, with $\lambda\tau(m'_1) \ge \lambda\tau(m'_2) \ge \cdots \ge \lambda\tau(m'_{d-1})$.

Suppose first that $u'$ lies in a layer lower than or equal to $r-1$. Then, by \Cref{obs:one child in layer i-1}, the child $m'_{d-1}$ must lie in a layer at least as high as $r-2$. Meanwhile, regardless of whether the diameter is even or odd, the two children $m_{d-1}$ and $m_d$ of $l$ lie in layers at least as high as $r-1$. Hence,
\[
\lambda\tau(m'_{d-1}) > \lambda\tau(m_{d-1}) \ge \lambda\tau(m_d),
\]
which shows that $l$ cannot trigger the \ruleu\ rule for the subtree rooted at $u'$.

Now suppose instead that $u'$ also lies in layer $r$. Then $T^{u'} = T_{k-1}$, and moreover the last child of $l$, namely $m_d$, is exactly $u'$. By the \textbf{Recursion rule} of the subtree comparator, $T^{u'}$ is ordered before $T^l$, which implies there exists a smallest index $i$ such that $\lambda\tau(m_i) < \lambda\tau(m'_i)$. Thus $l$ again cannot satisfy the \ruleu\ rule for the subtree rooted at $u'$.




    
Next, we proceed by induction over the layers to prove the lemma,  from $V_r$  to $V_0$, i.e., in a top-down manner.

\paragraph{Base case}
In $V_r$, the (possibly) only node we need to consider is $l_1$, whose parent is $l$. Suppose, for the sake of contradiction, that $l_1$ receives a pulse from $l$ before sending one to $l$. Such a pulse cannot result from a triggered \ruleu\ rule, as established above. It also cannot arise from a triggered \rulel\ rule, since that would require $l$ to have received pulses from \emph{all} its ports, yet $l_1$ has not sent any pulses to $l$. This yields a contradiction.

\paragraph{Induction step}
Assume that the lemma holds for all nodes in $V_s$ and higher layers for some $s \in [1, r]$. Let $u$ be a node in $V_{s-1}$ and let $v$ be its parent. Suppose, for the sake of contradiction, that $u$ receives a pulse from $v$ before sending one to $v$. As in the base case, $v$ cannot trigger the \rulel\ rule. Thus, it remains to rule out the possibility that $v$ triggers an \ruleu\ rule and sets the port pointing to $u$ as \ps. 

If $v = l$, then the same argument as in the base case yields a contradiction. If instead $v$ has a parent $w$, then for $v$ to trigger \ruleu, it must have received a non-zero number of pulses from $w$. By the induction hypothesis, $v$ must have sent pulses to $w$ before receiving any, and such pulses can only result from an \ruleu\ rule with the port pointing to $w$ designated as \ps. This contradicts \Cref{obs: port* constant}. Hence, $u$ also satisfies the lemma. 
\end{proof}

Observe that the following four lemmas continue to apply. We omit the proofs, as they follow from arguments nearly identical to those used previously.

\onlylmaybeleader*
\portpointsparent*
\leaderbeforedownstream*
\pulsedirection*




The next lemma follows from the way we constructed \ruleu\ and \rulel\ rules.

\begin{lemma} \label{obs:necessary condition for leader rule}
Node $l$ \ul{may} trigger the \rulel\ rule \ul{only after} every node $v \ne l$ has triggered the rule corresponding to its subtree $T_{\tau(v)} = T^v$ and has set $\mathsf{Port*}$ to point to its parent.
\end{lemma}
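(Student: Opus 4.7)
My plan is to proceed by top-down induction on the layer decomposition, starting from $l$'s neighbors and working downward. Fix the moment $t$ at which $l$ triggers the \rulel\ rule; the goal is to show that by time $t$, every $v \ne l$ has triggered the \ruleu\ rule generated from $T^v = T_{\tau(v)}$ and has set \ps\ to point to its parent. The ``\ps\ points to parent'' half follows immediately from the general-tree analog of \Cref{lem: even diameter port* points to parent}, so the substantive content lies in proving each $v$ triggers its own specific rule.

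The key intermediate claim I would establish by induction is: by time $t$, every $v \ne l$ has sent \emph{exactly} $\lambda\tau(v)$ pulses to its parent. For the base case, I consider any neighbor $m_j$ of $l$ in $G$. The firing condition of the \rulel\ rule, in both its odd- and even-diameter forms, forces $l$ to observe exactly $\lambda\tau(m_j)$ pulses on the port toward $m_j$; in the odd-diameter case this requirement applies in particular to the port toward $l_1$, for which $\lambda\tau(l_1) = \lambda(k-1) = 1$ under the manual parent assignment, matching the ``one pulse'' slot. Combined with \Cref{lem: at most lambda tau(u) pulses to parent}, each such neighbor $m_j$ has therefore sent exactly $\lambda\tau(m_j)$ pulses to $l$ by time $t$.

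From this I deduce that $m_j$ has triggered its own \ruleu\ rule by that moment, via the following general observation: the total number of pulses any node $u$ sends along \ps\ equals the maximum value of $\lambda(j)$ taken over the \ruleu\ rules $u$ has triggered; since every such $\lambda(j)$ is at most $\lambda\tau(u)$ by \Cref{lem: at most lambda tau(u) pulses to parent}, attaining the value $\lambda\tau(u)$ forces one of the triggered rules to be the one generated from $T^u$ itself. The inductive step then considers $v \in V_i$ with parent $u$ in a higher layer: by the inductive hypothesis $u$ has triggered the \ruleu\ rule for $T^u$, whose firing requires $u$'s non-\ps\ ports to have received pulse counts matching, as a multiset, $\{\lambda\tau(w) : w \text{ a child of } u\}$. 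The componentwise upper bounds from \Cref{lem: at most lambda tau(u) pulses to parent} force this multiset inequality to hold with equality port-by-port, so $v$ has sent exactly $\lambda\tau(v)$ pulses to $u$ by time $t$; the base-case observation then concludes that $v$ triggered the rule for $T^v$ itself.

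The main subtlety I anticipate is the case where several children of a common parent share isomorphic subtrees, and therefore identical $\lambda\tau$-values; a canonical port-to-slot bijection between $u$'s ports and the ``children slots'' of the abstract rule is not available. I would sidestep this by working with multisets and per-port upper bounds rather than fixing any particular bijection, so that the per-port bounds of \Cref{lem: at most lambda tau(u) pulses to parent} still force equality on the port of interest. A minor bookkeeping point, rather than a real obstacle, is the odd-diameter \rulel\ rule's treatment of $l_1$, which is handled by the identity $\lambda\tau(l_1) = 1$ noted above.
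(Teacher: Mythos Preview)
Your proposal is correct and follows essentially the same approach as the paper: a top-down induction starting from the children of $l$ and propagating to their children, using \Cref{lem: at most lambda tau(u) pulses to parent} together with the pulse-count requirements of the \rulel\ and \ruleu\ rules to force each node to have triggered its own rule. Your explicit intermediate claim (each $v$ has sent exactly $\lambda\tau(v)$ pulses to its parent) and the multiset-plus-per-port-upper-bound argument are precisely what the paper's terse ``by an induction on $i$ from $d$ down to $1$'' is alluding to, and your handling of the odd-diameter case via $\lambda\tau(l_1)=1$ matches the paper's manual parent assignment.
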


\begin{proof}
\Cref{lem: even diameter pulse direction}  establishes that, before $l$ triggers \rulel, pulses travel from children to parents. Therefore, we only need to consider pulses directed as such, as we are reasoning about a \ul{necessary condition} for $l$ to trigger \rulel.

\paragraph{Base case} First, let the children of $l$ be $\{m_1, m_2, \ldots, m_d\}$, ordered so that  
\[
\lambda\tau(m_1) \ge \lambda\tau(m_2) \ge \cdots \ge \lambda\tau(m_d).
\]
By the definition of the \rulel\ rule, $l$ must receive 
\[
\lambda\tau(m_1),\, \lambda\tau(m_2),\, \ldots,\, \lambda\tau(m_d)
\]
pulses from its children, respectively, in order to trigger \rulel. Since \Cref{lem: at most lambda tau(u) pulses to parent} shows that each child $m_i$ can send at most $\lambda\tau(m_i)$ pulses to $l$, it follows (by an induction on $i$ from $d$ down to $1$) that $l$ \emph{may} trigger the \rulel\ rule \emph{only after} each child $m_i$ has triggered the \ruleu\ rule corresponding to its own subtree $T_{\tau(m_i)} = T^{m_i}$. 
    
\paragraph{Induction step}
Let the children of a node $v \neq l$ be $\{m_1, m_2, \ldots, m_{d-1}\}$, ordered so that  
\[
\lambda\tau(m_1) \ge \lambda\tau(m_2) \ge \cdots \ge \lambda\tau(m_{d-1}).
\]
Assume it has been established that $l$ \emph{may} trigger the \rulel\ rule \emph{only after} $v$ triggers the \ruleu\ rule defined by $T_{\tau(v)} = T^v$.

By the definition of the \ruleu\ rule constructed from $T_{\tau(v)} = T^v$, node $v$ must receive 
\[
\lambda\tau(m_1),\; \lambda\tau(m_2),\; \ldots,\; \lambda\tau(m_{d-1})
\]
pulses from its $d-1$ children in order to trigger the \ruleu\ rule of $T_{\tau(v)} = T^{v}$. Meanwhile, by \Cref{lem: at most lambda tau(u) pulses to parent}, each child $m_i$ can send \emph{at most} $\lambda\tau(m_i)$ pulses to $v$. Similarly, it follows that $v$ \emph{may} trigger the \ruleu\ rule corresponding to $T_{\tau(v)} = T^{v}$ \emph{only after} each child $m_i$ has triggered its own \ruleu\ rule associated with $T_{\tau(m_i)} = T^{m_i}$. 

Therefore, we have extended the {necessary condition} for $l$ to trigger \rulel\ from node $v \neq l$ to all children of $v$. 
\end{proof}

The above lemma does not guarantee that $l$ \emph{will} trigger the \rulel\ rule. It only points out a condition for $l$ to trigger the \rulel\ rule. To show that this condition is eventually satisfied and that $l$ indeed triggers the \rulel\ rule, we present the next lemma.

\begin{lemma} \label{lem: general tree success}
Every node $v$ other than $l$ eventually triggers the \ruleu\ rule corresponding to its subtree $T_{\tau(v)} = T^v$ and sets $\mathsf{Port*}$ to point to its parent. After this has occurred for all nodes $v \ne l$, node $l$ will trigger the \rulel\ rule. At the moment \rulel\ is triggered, the tree is in quiescence, i.e., no pulses are in transit.
\end{lemma}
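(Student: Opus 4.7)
The approach is a bottom-up induction over the layers $V_0, V_1, \ldots, V_r$. The inductive claim I would prove is that every non-root node $v \in V_i$ eventually triggers the \ruleu\ rule associated with its own subtree $T_{\tau(v)} = T^v$, at which instant $\mathsf{Port*}$ is set to point at the parent of $v$, and the cumulative number of pulses $v$ has sent to its parent is exactly $\lambda\tau(v)$. The base case ($i=0$) is handled directly: every leaf $v \in V_0$ has $T^v \cong T_1$, so the \ruleu\ rule constructed from $T_1$ fires on initialization and discharges exactly $\lambda(1) = k-1 = \lambda\tau(v)$ pulses along its only port, which is thereby marked $\mathsf{Port*}$ and, by construction, points to the parent.

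For the inductive step, I would fix $v \in V_{s+1}$ with $v \ne l$ and list its children $m_1, \ldots, m_{d-1}$ sorted so that $\lambda\tau(m_1) \ge \cdots \ge \lambda\tau(m_{d-1})$. By the induction hypothesis each child $m_j$ eventually delivers exactly $\lambda\tau(m_j)$ pulses to $v$, and by \Cref{lem: general tree send to parent before receiving} the parent-port of $v$ cannot receive any pulse before $v$ sends one itself. Hence there is a moment at which the $d-1$ child-ports of $v$ hold precisely $\lambda\tau(m_1), \ldots, \lambda\tau(m_{d-1})$ pulses while the parent-port is still at zero; this is exactly the trigger pattern of the \ruleu\ rule built from $T_{\tau(v)}$. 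Firing that rule (possibly after earlier, weaker \ruleu\ rules have already fired at $v$) tops up the outgoing count on $\mathsf{Port*}$ to exactly $\lambda\tau(v)$, and by \Cref{obs: port* constant} combined with \Cref{lem: even diameter port* points to parent} this $\mathsf{Port*}$ is and remains the parent-port. \Cref{lem: at most lambda tau(u) pulses to parent} caps the outgoing total at $\lambda\tau(v)$, so the final count is exactly $\lambda\tau(v)$.

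Once the induction has swept through all layers below $V_r$, every child $m$ of $l$ has delivered exactly $\lambda\tau(m)$ pulses to $l$; in the odd-diameter case, the manually-designated child $l_1$ contributes $\lambda\tau(l_1) = k - (k-1) = 1$ pulse, matching the ``$1$'' appearing in the odd-diameter \rulel\ rule. Because \Cref{lem: even diameter pulse direction} prevents $l$ from sending anything in the upstream phase, its ports now display exactly the counts required by the appropriate variant of \rulel, which therefore fires. Quiescence at that instant follows from a recursive observation: a node only fires the \ruleu\ rule for its own subtree after all pulses from its children have arrived, and the cap of \Cref{lem: at most lambda tau(u) pulses to parent} prevents it from ever sending more. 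Hence when $l$ observes the correct trigger counts, every pulse ever sent during the upstream phase has already been delivered, and by \Cref{lem: even diameter leader before downstream} no \ruled\ pulse has been issued either.

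The main obstacle I anticipate is not termination or correctness per se, but controlling the asynchronous interplay of partial firings: intermediate pulse patterns at $v$ may match the \ruleu\ rule for some $T_{i'} \ne T^v$, and one must show that all such premature firings are harmless. This is exactly where \Cref{lem: partial_transmission} and \Cref{lem: at most lambda tau(u) pulses to parent} do the work, since together they force $\lambda(i') < \lambda\tau(v)$ for any such premature rule, so its effect is merely to send a prefix of the pulses that the correct rule will eventually send anyway. Given this, the later top-up by the $T_{\tau(v)}$ rule brings the outgoing transmission to exactly $\lambda\tau(v)$, and the induction closes cleanly.
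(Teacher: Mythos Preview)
Your proposal is correct and follows essentially the same bottom-up layer induction as the paper, with the same base case, the same use of \Cref{lem: at most lambda tau(u) pulses to parent} and \Cref{lem: partial_transmission} to control intermediate firings, and the same recursive quiescence argument. One minor tightening: to guarantee that the parent-port of $v$ \emph{remains} zero throughout the inductive step (not just until $v$ first sends), the paper additionally invokes \Cref{lem: even diameter pulse direction} together with \Cref{obs:necessary condition for leader rule}, whereas your citation of \Cref{lem: general tree send to parent before receiving} alone only covers the period before $v$'s first outgoing pulse.
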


\begin{proof}
    We again apply induction over layers, from $V_0$ to $V_r$, to show that every node $u$ other than $l$ eventually triggers the \ruleu\ rule corresponding to its subtree $T_{\tau(u)} = T^u$ and sets $\mathsf{Port*}$ to point to its parent.
    To do so, it suffices to show that $u$ eventually sends $\lambda\tau(u)$ pulses to its parent, as  the \ruleu\ rules are well-defined (\Cref{lem: partial_transmission} and \Cref{obs: port* constant}). 

Indeed, once every node $v \ne l$ has sent $\lambda\tau(v)$ pulses to its parent, node $l$ receives enough pulses from its neighbors to trigger the \rulel\ rule, as required.

\paragraph{Base case}
It is immediate that any node $u \in V_0$ eventually sends $\lambda\tau(u)=\lambda(1)=k-1$ pulses to its parent, as this follows directly from the definition of the \ruleu\ rule applied to degree-1 nodes.

\paragraph{Induction step}
Assume the lemma holds for all nodes up to $V_s$ for some $s \in [0, r-1]$. Let $u$ be a node in $V_{s+1}$. By the induction hypothesis, the children of $u$, denoted
$\{m_1, m_2, \dots, m_{d-1}\}$, eventually send
$\lambda\tau(m_1), \lambda\tau(m_2), \ldots, \lambda\tau(m_{d-1})$ pulses to $u$, respectively. This causes the \ruleu\ rule corresponding to the subtree
$T_{\tau(u)} = T^u$ to trigger, so $u$ sends $\lambda\tau(u)$ pulses through \ps\ to its parent.

During this process, $u$ does not receive any unexpected pulses from its parent. The possibilities of $u$ getting such a pulse from its parent due to \ruleu, \rulel, or \ruled\ are ruled out by combining 
\Cref{lem: even diameter only l may be leader}, 
\Cref{lem: even diameter port* points to parent},
\Cref{lem: even diameter leader before downstream} (or equivalently, just \Cref{lem: even diameter pulse direction}), and \Cref{obs:necessary condition for leader rule}. 

Finally, once the \ruleu\ rule for $T_{\tau(u)} = T^u$ has been triggered, the entire subtree rooted at $u$ reaches quiescence: each child of $u$ has already sent the maximum number of pulses permitted by \Cref{lem: at most lambda tau(u) pulses to parent}, and all such pulses have been received.
\end{proof}

    We remark that the above proof, combined with \Cref{lem: at most lambda tau(u) pulses to parent}, shows that the number of pulses every node $v \ne l$ sends to its parent due to the \ruleu\ rules is \emph{exactly} $\lambda\tau(v)$.  

\begin{lemma} \label{lem: all besides l nonleader}
    All nodes other than $l$ eventually terminate quiescently as $\nonleader$ after $l$ terminates quiescently as $\leader$.
\end{lemma}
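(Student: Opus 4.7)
The plan is to argue by induction on the layers, propagating top-down from $V_r$ to $V_0$. The key fact we exploit is the \emph{quiescence at the moment $l$ triggers} \rulel, guaranteed by \Cref{lem: general tree success}: at that instant, no pulses are in transit, and every node $v \ne l$ has already triggered the \ruleu\ rule for $T_{\tau(v)} = T^v$, has set \ps\ to point to its parent, and has activated the \ruled\ rule while disabling \rulel. Consequently, the only action any non-$l$ node can still perform is to trigger \ruled\ upon receiving a pulse along \ps.

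Concretely, when $l$ triggers \rulel, it sends exactly one pulse along each of its ports and then halts as $\leader$. For the base case, each child $u$ of $l$ therefore eventually receives exactly one pulse from $l$; by \Cref{lem: even diameter port* points to parent} this pulse arrives on $u$'s \ps, triggering \ruled, which causes $u$ to send exactly one pulse along every non-\ps\ port and terminate as $\nonleader$. For the inductive step, assume that every node in layer $V_s$ or higher with $s \ge 1$ eventually receives exactly one pulse on its \ps\ (from its parent), triggers \ruled, emits exactly one pulse toward each of its children, and halts as $\nonleader$. Let $u \in V_{s-1}$ with parent $v$. By the induction hypothesis $v$ emits exactly one pulse toward $u$, and this pulse arrives on $u$'s \ps\ (again by \Cref{lem: even diameter port* points to parent}), so $u$ triggers \ruled, forwards one pulse to each of its children, and terminates as $\nonleader$. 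Since every non-$l$ node has a parent, the induction covers all of $V \setminus \{l\}$.

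It remains to check that termination is \emph{quiescent}, i.e., that no node receives a pulse after it has halted. A node $w$ halts either immediately when triggering \ruled\ (non-$l$ case) or immediately when triggering \rulel\ (case $w = l$). In either case $w$'s only post-commitment exposure is to pulses sent toward it \emph{after} it halts. But by the induction above, every pulse in the post-\rulel\ phase is generated by exactly one \ruled\ (or the single \rulel) trigger at the sender and travels strictly from parent to child, so each directed edge $(v,u)$ with $v$ the parent of $u$ carries exactly one post-\rulel\ pulse, and each directed edge in the opposite direction carries none. Combined with the pre-\rulel\ quiescence of \Cref{lem: general tree success}, this means each halted node has already received all pulses destined for it, and no further pulses will ever arrive. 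The main (and essentially only) obstacle is bookkeeping: making sure that the pre-\rulel\ quiescence plus the strictly top-down, one-pulse-per-edge flow after \rulel\ together preclude any stray pulse from reaching an already-halted node; this is handled precisely by invoking \Cref{lem: general tree success} and \Cref{lem: even diameter port* points to parent} at each step of the induction.
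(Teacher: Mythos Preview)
Your proposal is correct and follows essentially the same approach as the paper: a top-down induction over layers showing that the single \rulel\ pulse from $l$ propagates via \ruled\ to every other node, with quiescence inherited from \Cref{lem: general tree success}. The only minor differences are cosmetic---the paper's base case is the layer $V_r$ (i.e., the node $l_1$ in the odd-diameter case) rather than ``children of $l$'', and for quiescence the paper explicitly invokes \Cref{lem: at most lambda tau(u) pulses to parent} to rule out further pulses from children, whereas you argue this via the strictly parent-to-child direction of all post-\rulel\ pulses; both arguments are equivalent.
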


\begin{proof}
We first observe that by \Cref{lem: general tree success}, the node $l$ eventually terminates as $\leader$, and at that moment the tree is in quiescence. For all remaining nodes, we assume that $l$ has already terminated as $\leader$ and proceed by induction from $V_r$ down to $V_0$.

\paragraph{Base case} 
For the layer $V_r$, the only node we may need to consider is $l_1$. By \Cref{lem: general tree success}, the node $l_1$ has already triggered the \ruleu\ rule corresponding to $T_{\tau(l_1)}$ and set $\mathsf{Port*}$ to point to $l$ before $l$ terminates as $\leader$ and sends a pulse to $l_1$. Consequently, $l_1$ will eventually receive a pulse from $\mathsf{Port*}$, trigger the \ruled\ rule, and terminate as $\nonleader$. At this point, $l$ has already terminated and will not send further pulses to $l_1$, and by \Cref{lem: at most lambda tau(u) pulses to parent}, $l_1$ will not receive additional pulses from its children. Thus, $l_1$ reaches quiescent termination.

\paragraph{Induction step}
Assume the lemma holds for all nodes in $V_s$ or higher layers, for some $s \in [1, r]$. Let $u$ be a node in $V_{s-1}$. By \Cref{lem: general tree success}, $u$ must have triggered the \ruleu\ rule corresponding to $T_{\tau(u)}$ and set $\mathsf{Port*}$ to its parent $v$. Since $v$ lies in a higher layer, the induction hypothesis ensures that $v$ eventually triggers the \ruled\ rule and terminates as $\nonleader$. Therefore, $u$ will eventually receive a pulse from $\mathsf{Port*}$, trigger the \ruled\ rule, and terminate as $\nonleader$. The argument for quiescent termination follows exactly as in the base case.
\end{proof}

\begin{lemma} \label{lem: general tree complexity}
    The total number of pulses sent by all nodes in this algorithm is $O(n^2)$.
\end{lemma}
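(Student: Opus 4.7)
The plan is to separate the pulses into those generated by Upstream rules and those generated by the Leader/Downstream rules, and then to bound each contribution separately. The remark following \Cref{lem: general tree success} already tells us that every node $v \ne l$ sends \emph{exactly} $\lambda\tau(v)$ pulses to its parent via the Upstream rules, and \Cref{lem: even diameter pulse direction} (applied in the current setting) ensures that these are the only pulses that flow before $l$ is elected. So the total Upstream contribution is $\sum_{v \ne l} \lambda\tau(v)$.

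Next, I would bound $\lambda\tau(v)$ uniformly. Since $\lambda\tau(v) = k - \tau(v)$ with $\tau(v) \ge 1$, we have $\lambda\tau(v) \le k - 1$. The key combinatorial observation is that $k \le n$: the enumeration $T_1, \ldots, T_k$ lists one representative per isomorphism class of subtrees $T^v$, and there are at most $n$ such subtrees in total since they are indexed by the $n$ vertices. Hence $\lambda\tau(v) \le n - 1$ for every $v$, and summing over the $n - 1$ non-root nodes yields at most $(n-1)^2 = O(n^2)$ Upstream pulses.

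For the remaining pulses, note that $l$ sends exactly one pulse along each of its ports when it triggers the Leader rule, and every node that triggers the Downstream rule sends one pulse along every port except $\mathsf{Port*}$. By the propagation argument already used in \Cref{lem: all besides l nonleader}, this means that each edge of the tree carries exactly one pulse in the parent-to-child direction after $l$ is elected, contributing a total of $n - 1$ pulses. Adding the two contributions gives $(n-1)^2 + (n-1) = O(n^2)$, as desired.

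There is no real obstacle here: the algorithm-level work has already been done in \Cref{lem: at most lambda tau(u) pulses to parent,lem: general tree success,lem: all besides l nonleader}, which together pin down the exact number of pulses on every edge in both directions. The only thing to be careful about is the bound $k \le n$ on the size of the subtree enumeration, which is immediate but worth stating explicitly so that $\lambda\tau(v) = O(n)$ can be invoked cleanly.
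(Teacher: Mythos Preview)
Your proposal is correct and follows essentially the same approach as the paper: decompose into Upstream pulses (bounded by $\sum_{v\ne l}\lambda\tau(v)\le (n-1)(k-1)\le (n-1)^2$ using $k\le n$) and Leader/Downstream pulses (one per edge, $n-1$ total), for a total of $(n-1)^2+(n-1)=O(n^2)$. The paper's proof is nearly identical, citing the same supporting lemmas.
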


\begin{proof}
For each node other than $l$, exactly $\lambda\tau(u) \le k-1 \le n-1$ pulses are sent to its parent due to the \ruleu\ rules (see the proofs of \Cref{lem: at most lambda tau(u) pulses to parent,lem: general tree success}), contributing at most $(n-1)^2$ pulses in total. The \rulel\ and \ruled\ rules cause exactly one pulse to be sent across each edge (see the proof of \Cref{lem: all besides l nonleader}), contributing an additional $n-1$ pulses. Therefore, the total number of pulses sent during the execution of the algorithm is at most $(n-1)^2 + (n-1) = O(n^2)$.
\end{proof}

\mainthm*
\begin{proof}[Proof of \Cref{thm:general_tree_leader_election}]
The correctness and quiescent termination of the leader election algorithm follow from \Cref{lem: general tree success,lem: all besides l nonleader}, and the message complexity bound follows from \Cref{lem: general tree complexity}.
\end{proof}

\section{Randomized Impossibilities}\label{sect:impossibility}
In this section, we show the following impossibility result: 

\impossibility*

    Let us first consider a 2-node path $P_2$, where the two nodes are $\{u,v\}$. Consider the view of $u$ (resp.,~$v$): since the node has only one communication channel, and the communication channel transmits only indistinguishable pulses, the algorithm restricted to $u$ can be captured by an automaton with input alphabet size 1, i.e.,~a probabilistic unary automaton $\{Q,q_s,F_{\leader},F_{\nonleader}, \delta,\sigma\}$ as follows. Note that $u$ must reach some accepting state to terminate, and the output of $u$, $\leader$ or $\nonleader$ depends on the exact accepting state it reaches.
    

    \begin{mdframed}
    \begin{itemize}
        \item $Q$ is the state space.
        \item $q_s$ is the initial state.
        \item $F_{\leader} \subseteq Q$ is the set of accepting states labeled $\leader$.
        \item $F_{\nonleader} \subseteq Q$ is the set of accepting states labeled $\nonleader$.
        \item $\delta: Q\times Q \rightarrow[0,1]$ is the transition function. $\delta(q_1,q_2)$ denotes the probability that the automaton transitions into state $q_2$ after receiving a symbol when in state $q_1$. It holds that $\sum_{q_2\in Q}\delta(q_1,q_2) = 1$. If $q_1 \in F_{\leader}$, $\delta(q_1,q_2) > 0$ implies $q_2 \in F_{\leader}$. If $q_1 \in F_{\nonleader}$, $\delta(q_1,q_2) > 0$ implies $q_2 \in F_{\nonleader}$. This requirement stems from termination of the algorithm: after outputting $\leader$ or $\nonleader$ (denoted by transitioning into a state in $F_{\leader}$ and $F_{\nonleader}$), a node cannot change its output.
        \item $\sigma: Q \rightarrow \mathbb{Z}_{\geq 0}$. When the automaton transitions into state $q$, the node sends $\sigma(q)$ pulses.
    \end{itemize}
    \end{mdframed}

    Let $P_{\leader}(c)$ denote the probability that the automaton's state is in $F_{\leader}$ after receiving $c$ symbols, and define $P_\nonleader(c)$ in a similar fashion.

\begin{lemma} \label{lem: P monotonous}
    For $c_1 \leq c_2$, the following hold.
    \begin{itemize}
        \item $P_{\leader}(c_1) \leq P_{\leader}(c_2)$.
        \item $P_{\nonleader}(c_1) \leq P_{\nonleader}(c_2)$.
    \end{itemize}
\end{lemma}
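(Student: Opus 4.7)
The plan is to prove both inequalities simultaneously by exploiting the \emph{absorbing property} of the accepting state sets $F_{\leader}$ and $F_{\nonleader}$ that is built into the definition of $\delta$. Recall that the specification explicitly requires that whenever $q_1 \in F_{\leader}$, any state $q_2$ reachable in one step from $q_1$ (i.e., any $q_2$ with $\delta(q_1,q_2) > 0$) also lies in $F_{\leader}$, and analogously for $F_{\nonleader}$. Thus, once a trajectory of the automaton enters $F_{\leader}$, it can never leave $F_{\leader}$; the same holds for $F_{\nonleader}$. This captures termination: a node that has committed to $\leader$ (resp.\ $\nonleader$) never later outputs $\nonleader$ (resp.\ $\leader$).

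The main argument will then be a short induction on the gap $c_2 - c_1 \geq 0$. For the base case $c_2 = c_1$ the inequality is trivial. For the inductive step it suffices to show $P_{\leader}(c+1) \geq P_{\leader}(c)$ for every $c$. Writing the distribution over $Q$ after $c$ input symbols as $\pi_c$, I would decompose the event ``state lies in $F_{\leader}$ after $c+1$ symbols'' by conditioning on the state after $c$ symbols. Every trajectory that is already inside $F_{\leader}$ at step $c$ contributes its full probability mass to being in $F_{\leader}$ at step $c+1$ by the absorbing property (the one-step transition keeps it in $F_{\leader}$ with probability $1$, since $\sum_{q_2 \in F_{\leader}} \delta(q_1,q_2) = 1$ whenever $q_1 \in F_{\leader}$). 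Trajectories outside $F_{\leader}$ at step $c$ contribute a non-negative amount. Hence $P_{\leader}(c+1) \geq P_{\leader}(c)$, and iterating gives $P_{\leader}(c_2) \geq P_{\leader}(c_1)$. The identical argument, with $F_{\leader}$ replaced by $F_{\nonleader}$, handles the second inequality.

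There is no genuine technical obstacle here; the statement is essentially a rephrasing of the fact that accepting sets are absorbing, together with the observation that absorption monotonically increases the hitting probability over time. The only care needed is to explicitly invoke the two conditions $\sum_{q_2 \in Q} \delta(q_1,q_2) = 1$ and $q_1 \in F_{\leader} \Rightarrow \delta(q_1,q_2) > 0 \Rightarrow q_2 \in F_{\leader}$ together, so as to conclude that the conditional probability of staying in $F_{\leader}$ after one additional symbol is exactly $1$. The argument does not use the output function $\sigma$ at all, and it is independent of whether the process being analyzed is the view of $u$ or of $v$, so the lemma applies to both endpoints of $P_2$ uniformly.
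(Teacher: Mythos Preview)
Your proposal is correct and follows essentially the same approach as the paper: reduce to the one-step case $c_2 = c_1 + 1$, then use the absorbing property of $F_{\leader}$ (resp.\ $F_{\nonleader}$) to conclude that trajectories already in the accepting set stay there with probability~$1$, while trajectories outside contribute nonnegatively. The paper's proof is slightly terser (a two-case split rather than an explicit conditioning on $\pi_c$), but the argument is the same.
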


\begin{proof}
    We only need to show that for $c_2 = c_1+1$, the statements are true. Since $F_{\leader} \subseteq Q$ are the \emph{accepting} states, upon receiving $c_1$ symbols, there are two cases.
    \begin{itemize}
        \item If the automaton is in a state in $F_{\leader}$, then upon receiving the next letter, the automaton is still in a state in $F_{\leader}$.
        \item If the automaton is not in a state in $F_{\leader}$, then upon receiving the next letter, the automaton may or may not make a transition to a state in $F_{\leader}$.
    \end{itemize}
    Therefore, $P_{\leader}(c_1) \leq P_{\leader}(c_2)$. The same argument works for the $\nonleader$ side.
\end{proof}

Now we define $P_{\leader} = \sup_{c\in \mathbb{Z}_{\geq 0}}P_{\leader}(c)$, and $P_{\nonleader}$ is defined similarly.

\begin{lemma} \label{lem: sup bound}
    $P_{\leader} + P_{\nonleader} \leq 1$.
\end{lemma}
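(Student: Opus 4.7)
The plan is to combine the monotonicity established in \Cref{lem: P monotonous} with the fact that the two accepting-state families $F_{\leader}$ and $F_{\nonleader}$ must be disjoint. The disjointness follows from the termination semantics of the automaton: if a state $q$ belonged to both $F_{\leader}$ and $F_{\nonleader}$, then reaching $q$ would simultaneously commit the node to two contradictory outputs. More carefully, the trap conditions on $\delta$ (once in $F_{\leader}$, all transitions stay in $F_{\leader}$, and symmetrically for $F_{\nonleader}$) make it impossible for an execution to visit both sets unless they already overlap; so the natural well-definedness assumption is $F_{\leader} \cap F_{\nonleader} = \emptyset$.

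Given disjointness, for every fixed $c$ the events ``state lies in $F_{\leader}$ after $c$ symbols'' and ``state lies in $F_{\nonleader}$ after $c$ symbols'' are disjoint events in the same probability space, so
\[
P_{\leader}(c) + P_{\nonleader}(c) \leq 1 \quad \text{for all } c \in \mathbb{Z}_{\geq 0}.
\]

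Next I would invoke \Cref{lem: P monotonous}: both sequences $\{P_{\leader}(c)\}$ and $\{P_{\nonleader}(c)\}$ are nondecreasing and bounded above by $1$, hence convergent, and their suprema coincide with their limits. Therefore
\[
P_{\leader} + P_{\nonleader} \;=\; \lim_{c \to \infty} P_{\leader}(c) + \lim_{c \to \infty} P_{\nonleader}(c) \;=\; \lim_{c \to \infty} \bigl(P_{\leader}(c) + P_{\nonleader}(c)\bigr) \;\leq\; 1,
\]
which is the desired bound.

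The only delicate point—really the main obstacle—is justifying $F_{\leader} \cap F_{\nonleader} = \emptyset$ from the model, since without that the pointwise inequality fails. I would address this in a single sentence by appealing to the fact that the labels $\leader$ and $\nonleader$ are \emph{disjoint} outputs by definition of the leader election problem: an accepting state is tagged with exactly one output, so the two sets of accepting states are disjoint by construction. Once this is in place, the rest is a straightforward passage to the limit.
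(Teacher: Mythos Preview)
Your proof is correct and is essentially the same argument as the paper's: both combine the disjointness of $F_{\leader}$ and $F_{\nonleader}$ (giving $P_{\leader}(c)+P_{\nonleader}(c)\le 1$ for each fixed $c$) with the monotonicity of \Cref{lem: P monotonous}. The only cosmetic difference is that the paper phrases it as a contradiction---picking $c_{\leader},c_{\nonleader}$ witnessing $P_{\leader}(c_{\leader})+P_{\nonleader}(c_{\nonleader})>1$ and then using monotonicity to push both to the larger index---whereas you take a direct passage to the limit; your explicit justification of disjointness is, if anything, more careful than the paper's implicit ``which is impossible.''
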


\begin{proof}
    Assume otherwise $P_{\leader}+ P_{\nonleader}> 1$. Then there exist $c_{\leader}, c_{\nonleader}$ such that $P_{\leader}(c_{\leader}) + P_{\nonleader}(c_{\nonleader}) > 1$. Without loss of generality, let $c_{\nonleader} \geq c_{\leader}$. By \Cref{lem: P monotonous}, $P_{\leader}(c_{\nonleader}) + P_{\nonleader}(c_{\nonleader}) > 1$, which is impossible.
\end{proof}

We are now ready to prove \Cref{thm: symmetric_about_edge}.

\begin{proof}[Proof for \Cref{thm: symmetric_about_edge}]
We first consider the two-party case, where two nodes $\{u,v\}$ performing some leader election algorithm via content-oblivious communication along an edge. To upper bound the success probability, it suffices to assume that the communication between  $\{u,v\}$ stabilize at some moment (i.e.,~there are no more pulses traveling between $u$ and $v$), for otherwise $u$ and $v$ will never terminate, failing the leader election task. Let $Q(c_u, c_v)$ denote the probability that at stabilization, $u$ received $c_u$ pulses, and $v$ received $c_v$ pulses. The probability that the leader election is successful can be obtained as follows: For all possibilities of $(c_u,c_v)$ pair, sum up the probability that the outputs of $u$ and $v$ constitute a valid leader election outcome, weighted by $Q(c_u, c_v)$.
\begin{align*}
    \Pr[\text{Correct}]
    &= \sum_{c_v, c_u} Q(c_u, c_v)(P_{\leader}(c_v)P_{\nonleader}(c_u)+P_{\nonleader}(c_v)P_{\leader}(c_u)) \\
    &\leq P_{\leader}P_{\nonleader} + P_{\nonleader}P_{\leader} \\
    &\leq \frac{1}{2}
\end{align*}
We have thus proven that for an anonymous edge, content-oblivious leader election terminates correctly with probability at most $\frac{1}{2}$. We are now ready to extend the argument to all graphs $G$ symmetric about an edge $\{u,v\}$ by a simple simulation argument.

Let graph $G$ be symmetric about edge $e$, so $G-e$ has two isomorphic connected components, $H$ and $K$.  We claim $G$ does not admit a leader election algorithm that succeeds with probability greater than $\frac{1}{2}$, even if all nodes know the topology $G$. Otherwise, on a 2-node path $P_2$, let the two nodes $u$ and $v$ respectively simulate $H$ and $K$, and perform the algorithm on $G$, which leads to contradiction. Since $H$ and $K$ are isomorphic, the simulation does not require first breaking the symmetry between the two nodes $u$ and $v$.
\end{proof}

The following corollary, which is restricted to the deterministic case, directly follows.

\impossibilitydet*

\begin{proof}
    We want to show for any edge-symmetric network $G$ where all nodes know $G$ a priori, for any leader election algorithm, there exists some $\id$ assignment on which the leader election fails. Assume otherwise: fix any $G$, there is a leader election algorithm that succeeds with all $\id$ assignments. However, this contradicts with \Cref{thm: symmetric_about_edge} by the following reduction. With the knowledge of the size of $G$, the anonymous but randomized nodes, as in the setting of \Cref{thm: symmetric_about_edge}, each samples uniformly from an $\id$ space $\{1,2,\ldots,3|V(G)|^2\}$. There are $\binom{|V(G)|}{2} < |V(G)|^2$ node pairs, each bearing a $\frac{1}{3|V(G)|^2}$ probability of $\id$ clashing. By a union bound, with probability at least $\frac{2}{3}$, all nodes choose different $\id$s. In such a case, the deterministic leader election algorithm can be invoked. This reduction gives a leader election algorithm with success probability at least $\frac{2}{3}$ for anonymous and randomized nodes, which contradicts the $\frac{1}{2}$ upper bound of \Cref{thm: symmetric_about_edge}. Therefore, the assumption is erroneous, so for any edge-symmetric network $G$, no leader election algorithm succeeds with all $\id$ assignments.
\end{proof}

\paragraph{Discussion}
The proof of our impossibility result is similar to the two-party impossibility in \cite[Theorem 20]{content-oblivious-leader-election-24}. In both proofs, the argument uses the contentless nature of pulses to model the behavior of degree-1 nodes as a unary automaton. In our case, we allow this automaton to be probabilistic, which yields a slightly stronger impossibility. The key difference between the two results lies in the reduction that extends the impossibility from a single edge to graphs containing bridges.

In our setting, where nodes are given topology knowledge, the assumption that the graph is \emph{non-edge-symmetric} is essential. Suppose there exists a leader election algorithm that succeeds on all (or some) non-edge-symmetric graphs $G$. The existence of such an algorithm does \emph{not} appear to contradict either the deterministic two-party impossibility of \cite[Theorem 20]{content-oblivious-leader-election-24} or our randomized two-party impossibility. For two parties $u$ and $v$ to simulate a leader election algorithm on $G$, they would need to simulate two \emph{non-isomorphic} connected components $H$ and $K$ of $G - e$ for some edge $e$. However, deciding which party simulates $H$ and which simulates $K$ already requires a leader election between $u$ and $v$. If $u$ and $v$ choose $H$ and $K$ independently, the probability that their choices differ is at most $1/2$.



\section{Necessity of Knowledge on Underlying Topology}\label{sect:necessity}

In this section, we show the necessity of exact topology knowledge for leader election on trees. We show a family $\mathcal{G}$ of two graphs where each $G \in \mathcal{G}$ by itself grants a quiescently terminating algorithm, but no algorithm achieves termination if the graph topology is drawn from $\mathcal{G}$.


We write $P_n$ to denote a path of $n$ nodes. We prove the following: 

\necessity*

Since $P_3$ and $P_5$ are not edge-symmetric, if the graph topology is known, then we do have a quiescently terminating algorithm for them, as shown in \Cref{thm:general_tree_leader_election}.

\begin{proof}
    For the sake of contradiction, we assume such a deterministic, terminating algorithm $\mathcal{A}$ exists. We first define a labeling over the nodes in $P_n$.
    
    \begin{itemize}
        \item Assign label $\lnode$ to the nodes with degree 1 (leaf nodes).
        \item Assign label $\mnode$ to the nodes with degree 2 (middle nodes).
    \end{itemize}
    Since a node can count the number of its ports, it can label itself at the initiation of any algorithm, so we may assume that the labels are provided as input to the nodes. Therefore, $\mathcal{A}$ can be viewed as a deterministic, terminating leader election algorithm $\mathcal{A'}$ for the family of labeled graphs:
    \begin{equation*}
        \{\lnode-\mnode-\lnode \hspace{2mm},\hspace{2mm}\lnode-\mnode-\mnode-\mnode-\lnode\}
    \end{equation*}

    Now we define a new type of labeled node $\xnode$. Observe that the algorithm $\mathcal{A'}$ can be turned into an algorithm $\mathcal{A''}$ that elects a leader for the following family of labeled graphs:
    \begin{equation*}
        \{\lnode-\mnode-\lnode \hspace{2mm},\hspace{2mm}
        \lnode-\mnode-\mnode-\mnode-\lnode \hspace{2mm},\hspace{2mm}
        \xnode-\lnode \hspace{2mm},\hspace{2mm}
        \xnode-\mnode-\xnode\}
    \end{equation*}

    To see this, we simply let $\mathcal{A''}$ on all nodes labeled $\xnode$ simulate $\lnode-\mnode-$, and then perform $\mathcal{A'}$. A subtle issue is that since one $\xnode$ node needs to simulate two nodes, $\xnode$ nodes must ensure the simulated nodes have $\id$s that do not conflict with the rest of the graph. To achieve this, when performing $\mathcal{A''}$, $\lnode$ and $\mnode$ nodes with assigned $\id$ $k$ runs $A'$ with $\id$ $2k$, while a $\xnode$ simulates an $\lnode$ node of $\id$ $2k$ and an $\mnode$ node of $\id$ $2k+1$.

    Similar to our approach in \Cref{thm: symmetric_about_edge} and \cite[Theorem 20]{content-oblivious-leader-election-24}, the behavior of an $\lnode$ node with assigned $\id$ $i$ performing algorithm $\mathcal{A''}$ can be described by an unary automaton $\mathcal{A''}_{\lnode}(i)$. The state of the unary automaton $\mathcal{A''}_{\lnode}(i)$ is entirely a deterministic function of the number of symbols received, while the terminating requirement of $\mathcal{A''}$ forbids it from changing output (from $\leader$ to $\nonleader$ or the opposite direction). Therefore, the output of $\mathcal{A''}_{\lnode}(i)$ is entirely a function of $i$. Now consider all such unary automata:
    \begin{equation*}
        \mathcal{A''}_{\lnode}(1), \mathcal{A''}_{\lnode}(2), \mathcal{A''}_{\lnode}(3), \ldots  
    \end{equation*}
    for all $\id$s. There must \emph{not} be 2 $\id$s $i$ and $j$ where $\mathcal{A''}_{\lnode}(i)$ and $\mathcal{A''}_{\lnode}(j)$ both output $\leader$, since otherwise we have a contradiction by assigning $\id$s $i$ and $j$ to the two $\lnode$ nodes in the graph $\lnode-\mnode-\lnode$. Hence, there exists an $\id$ $k_{\lnode}$ such that all $\lnode$ nodes with $\id \geq k_{\lnode}$ output $\nonleader$. Similarly, we can define $k_{\xnode}$ due to the presence of the labeled graph $\xnode-\mnode-\xnode$.

    Now for the graph $\xnode-\lnode$ which $\mathcal{A''}$ claims to solve with termination, assign $\id$s $\max(k_{\lnode}, k_{\xnode})$ and $\max(k_{\lnode}, k_{\xnode})+1$ to the two nodes respectively. Both nodes must output $\nonleader$ by our choice of $k_{\lnode}$ and $k_{\xnode}$, but they are the only nodes in the graph $\xnode-\lnode$. Hence, we arrive at a contradiction.
\end{proof}

\paragraph{Generalization} Similar to the proof of \Cref{thm: symmetric_about_edge}, we believe that the argument above can be extended to the randomized setting; however, we chose not to pursue this extension for the sake of simplicity.

There exist many graph families beyond $\mathcal{G} = \{P_3, P_5\}$ for which the same impossibility result can be established. Our argument does not depend on the internal structure of $\lnode$; in fact, replacing $\lnode$ with any graph, \emph{not limited to rooted trees}, does not affect the impossibility proof, thereby yielding infinitely many counterexamples. Consequently, our proof of the necessity of topology knowledge applies far more broadly than just to trees.

\section{Stabilizing Leader Election on Trees}\label{sec:Stabilizing}

In this section, we show a simple stabilizing leader election algorithm for trees. By trimming leaves iteratively, this algorithm reduces the problem of stabilizing leader election on a tree to stabilizing leader election on a single edge, and the latter admits a simple algorithm by comparing $\id$s.

\stabilizing*

\begin{algorithm}
\DontPrintSemicolon
\caption{Stabilizing algorithm for node $v$}
\label{alg: tree_stabilizing}
$\mathsf{isLeaf} \leftarrow \mathtt{False}$\;
\OnEvent{$v$ is a leaf and $\mathsf{isLeaf} = \mathtt{False}$}{
    $\mathsf{sendPulse()}$\; \label{send-pulse=1}
    $\mathsf{isLeaf} \leftarrow \mathtt{True}$\;
}
\OnEvent{$\mathsf{receivePulse(\mathit{u})}$ and $\mathsf{isLeaf} = \mathtt{False}$}{
    $\mathsf{removeNeighbor(\mathit{u})}$\;
}
\OnEvent{$\mathsf{receivePulse(\mathit{u})}$ and $\mathsf{isLeaf} = \mathtt{True}$ \label{event}}{
    \textcolor{blue}{\tcp{Edge Leader Election}}
    \For{$i\in[1,\id(v)]$}{
        $\mathsf{sendPulse()}$\; \label{send-pulse=2}
    }
    \For{$i\in[1,\id(v)]$ \label{for-loop}}{
        $\mathsf{receivePulse(\mathit{u})}$\;
    }
    Set itself as the leader and terminate the algorithm\;
}
\end{algorithm}
We first explain routines and variables appearing in \Cref{alg: tree_stabilizing}. 
\begin{itemize}
    \item $\mathsf{isLeaf}$ is a Boolean indicator that is set to $\mathtt{True}$ whenever a node becomes a leaf after removing neighbors using $\mathsf{removeNeighbor}$, or if it is initially created as a leaf.
    \item $\mathsf{sendPulse()}$ sends a single pulse to the node's unique remaining neighbor. Note that $\mathsf{sendPulse()}$ is invoked only by leaf nodes.
    \item $\mathsf{receivePulse(\mathit{u})}$ is triggered whenever a pulse is received from a neighbor $u$.
\end{itemize}
    
Now we prove the correctness and the claimed message complexity of the algorithm.

\begin{proof}[Proof of \Cref{thm:stabilizing}]
Suppose at some point a node $v$ calls $\mathsf{removeNeighbor}(u)$, then at this moment $v$ is not a leaf.
Moreover, $\mathsf{removeNeighbor}(u)$ is triggered only if $u$ has sent a pulse to $v$, meaning that $u$ must have already become a leaf. 

In view of the above, consider the graph $H$ initialized as the original network topology $T$, and whenever a node $v$ calls $\mathsf{removeNeighbor}(u)$, we remove the leaf $u$ from $H$.

 \paragraph{Phase 1: Leaf trimming}  We argue that at some point, $H$ is left with two nodes, $s$ and $t$, connected by an edge.  Assume at the moment $H$ has more than two nodes, among which $k$ are leaf nodes. Let $l_1, l_2, \ldots, l_k$ denote the leaf nodes and $p_1, p_2, \ldots, p_k$ their unique neighbors, respectively. At this moment, there must be one pulse due to Line~\ref{send-pulse=1} traveling along the direction $l_i \rightarrow p_i$ for each $i \in \{1,2,\ldots, k\}$. Consider the moment the first pulse among these pulses arrives at the destination. Without loss of generality, let   $p_1$ be the receiver of this pulse. Since $H$ has more than two nodes, $p_1$ must not be a leaf node. Hence upon receiving the pulse, $p_1$ calls $\mathsf{removeNeighbor}(l_1)$, decreasing the number of nodes of $H$ by one.

On the final edge $\{s,t\}$, there are also two pulses due to Line~\ref{send-pulse=1} traveling in the opposite directions.
For both endpoints $s$ and $t$, the recipt of such a pulse triggers the event of Line~\ref{event}, so they ultimately enter the edge leader election phase.

\paragraph{Phase 2: Edge leader election} We show that the algorithm stablizes with exactly one of $s$ and $t$ as the leader.
Without loss of generality, assume  that $\id(s) >\id(t)$. Then node $t$ will eventually receive exactly $\id(s) >\id(t)$ pulses from $s$ during the for-loop starting at Line~\ref{for-loop}, which is sufficient for $t$ to exit the loop and become the leader. On the other hand, $s$ receives exactly $\id(t) < \id(s)$ pulses from $t$ during the for-loop starting at Line~\ref{for-loop}, meaning that $s$ can never leave the for-loop, so it can never become the leader.

\paragraph{Message complexity} We now analyze the message complexity of the algorithm. Each node sends at most one pulse due to Line~\ref{send-pulse=1}, as there is no way to reset $\mathsf{isLeaf}$ from $\mathtt{True}$ to $\mathtt{false}$. Only the two endpoints of the final edge $\{s,t\}$ can trigger the event of Line~\ref{event} to enter the edge leader election phase, so the total number of pulses sent due to  Line~\ref{send-pulse=2} is $\id(s)+\id(t) \leq 2\mathsf{ID_{max}}-1$. Therefore, the total number of pulses sent is  at most $n+2\cdot\mathsf{ID_{max}}-1$.
\end{proof}

\section{Conclusion and Open Questions}

In this work, we fully characterize the solvability of content-oblivious leader election on trees with topology knowledge: Terminating leader election is possible \emph{if and only if} the tree is not symmetric about an edge.

For \emph{graphs} that are symmetric about an edge, even with topology knowledge and unique identifiers, no randomized leader election can succeed with probability greater than $0.5$. For \emph{trees} that are not symmetric about an edge, knowledge of the network topology enables a quiescently terminating leader election algorithm. Moreover, for even-diameter trees, the requirement of knowing the full topology can be replaced by knowing only the diameter, at the mere cost of losing quiescence. Our algorithms, while consistent with the impossibility result of \citet{fully-defective-22}, demonstrate that non-trivial problems can indeed be solved with a little aid.

The knowledge of the network topology is \emph{necessary} in the following sense:  
Both $P_3$ and $P_5$ are not symmetric about an edge and therefore admit quiescently terminating leader election algorithms; however, if nodes only know that the underlying topology is drawn from $\mathcal{G}=\{P_3, P_5\}$, then we show that terminating leader election is impossible.  


Being the first work to explore content-oblivious leader election beyond 2-edge-connected networks, our focus is primarily on \emph{qualitative} results. It remains widely open whether the efficiency of our leader election algorithms can be improved, both in terms of message complexity and advice complexity. For odd-diameter trees, the advice complexity of our algorithm is $O(n)$, as the topology of an $n$-vertex tree can be encoded using $O(n)$ bits. For even-diameter trees, the advice complexity of our algorithm is $O(\log n)$, as the diameter of an $n$-vertex tree can be written using $O(\log n)$ bits.




Given that content-oblivious leader election in 2-edge-connected networks~\cite{content-oblivious-leader-election-24,2-connected-leader-election-25} and in trees has been studied, a natural next step is to characterize the solvability of content-oblivious leader election in \emph{general graphs}, since any connected graph decomposes into a tree of 2-vertex-connected components, and every 2-vertex-connected graph is also 2-edge-connected. On the impossibility side, we have shown in this work that if $G$ is symmetric about an edge, then terminating leader election is not possible on $G$. We conjecture that, apart from the symmetric case, knowledge of the network topology $G$ suffices to yield a terminating leader election algorithm.


Many graph problems remain unexplored in the content-oblivious model. For 2-edge-connected networks, a universal compiler transforms any distributed algorithm into one in the content-oblivious model, as shown by \citet{fully-defective-22}. Thus, the feasibility of leader election in 2-edge-connected networks implies that every other graph problem is also solvable. In contrast, due to the impossibility result of \citet{fully-defective-22}, for trees and other non-2-edge-connected networks, such a universal simulation cannot exist, even with full topological knowledge. We leave open the question of whether other fundamental graph problems, such as 2-coloring, 3-coloring, and MIS, are solvable on trees in the content-oblivious model, provided that the network topology is known.

\printbibliography
\end{document}